\documentclass[a4paper]{llncs}


\usepackage{environ}
\newcommand{\repeattheorem}[1]{%
	\begingroup
	\renewcommand{\thetheorem}{\ref{#1}}%
	\expandafter\expandafter\expandafter\theorem
	\csname reptheorem@#1\endcsname
	\endtheorem
	\endgroup
}
\NewEnviron{reptheorem}[1]{%
	\global\expandafter\xdef\csname reptheorem@#1\endcsname{%
		\unexpanded\expandafter{\BODY}%
	}%
	\expandafter\theorem\BODY\unskip\label{#1}\endtheorem
}

\newcommand{\repeatlemma}[1]{%
	\begingroup
	\renewcommand{\thelemma}{\ref{#1}}%
	\expandafter\expandafter\expandafter\lemma
	\csname replemma@#1\endcsname
	\endtheorem
	\endgroup
}
\NewEnviron{replemma}[1]{%
	\global\expandafter\xdef\csname replemma@#1\endcsname{%
		\unexpanded\expandafter{\BODY}%
	}%
	\expandafter\lemma\BODY\unskip\label{#1}\endtheorem
}

\usepackage{amsmath,amsfonts,amssymb}
%
%

\usepackage{lineno}
\pagestyle{plain} 

\usepackage[shortlabels]{enumitem}

\usepackage{longtable}
\usepackage{url}

\usepackage[dvipsnames]{xcolor}

\usepackage{xspace}

\usepackage{xargs}
\usepackage{xifthen}
\usepackage{framed}
\usepackage{nicefrac}

\usepackage{lipsum}  
\SetLipsumParListSurrounders{\begingroup\color{gray}}{\endgroup}
\SetLipsumSentenceListSurrounders{\begingroup\color{gray}}{\endgroup}

\setlength {\marginparwidth }{1.7cm}

\usepackage{subcaption}
\usepackage{multirow}  
\usepackage{makecell}  

\usepackage{xpunctuate}
\usepackage[all,british]{foreign} 
\redefnotforeign[ie]{i.e\xperiodafter} 
\redefnotforeign[eg]{e.g\xperiodafter}

\usepackage[algo2e, ruled, vlined, linesnumbered, nofillcomment]{algorithm2e}

\SetKwInput{KwInput}{Input}                
\SetKwInput{KwOutput}{Output}              
\SetKw{Continue}{continue}
\SetKw{Break}{break}

\let\oldnl\nl
\newcommand{\nonl}{\renewcommand{\nl}{\let\nl\oldnl}}


\SetCommentSty{commentstyle}
\DontPrintSemicolon

\usepackage{tikz}
\usepackage{ctable}

\usepackage[full,small]{complexity}

\usepackage[draft,english,silent]{fixme}
\fxsetup{theme=color,layout=marginnote,innerlayout=noinline}



\spnewtheorem{fact}[theorem]{Fact}{\bfseries}{\itshape}


\newcommand*\varrule[1][0.4pt]{\leavevmode\leaders\hrule height#1\hfill\kern0pt}
\setlength{\lightrulewidth}{0.8pt}

\renewcommand{\G}{\mathbf G}

\def\adm_#1{ \operatorname{adm}_{#1} }
\def\wcol_#1{ \operatorname{wcol}_{#1} }
\def\scol_#1{ \operatorname{scol}_{#1} }

\newcommand{\orderstrict}{\prec}
\def\Vleft_#1{V_{\orderstrict #1}}

\newcommand{\pp}{\operatorname{pp}}
\newcommand{\Candidates}{\mathsf{Cand}}

\makeatletter  
\newcommand*{\defeq}{\mathrel{\rlap{%
                     \raisebox{0.3ex}{$\m@th\cdot$}}%
                     \raisebox{-0.3ex}{$\m@th\cdot$}}%
                     =}
\makeatother



\newcommand{\Neigh}[2]{%
		\ifthenelse{\equal{#2}{}}{N^{#1}}{N_{#1}^{#2}}%
}

\newcommand{\Main}{\texttt{Main}\xspace}


\def\Dvorak{Dvo\v{r}\'{a}k\xspace}

\newcommand*{\circled}[4]{\tikz[baseline=(char.base)]{
             \node[circle,fill=#3,draw=#4,draw,inner sep=1pt,minimum size=1.2em] (char) {\color{#2} #1};}}

\newcommand{\Mark}[1]{\circled{#1}{gray}{white}{gray}}
\makeatletter
\newcommand{\codelabel}[2]{%
	\protected@write \@auxout {}{\string \newlabel {#1}{{#2}{}}}
	\Mark{#2}
}

\makeatother

\renewenvironmentx{leftbar}[2][1=0.5pt, 2=5pt]{%
  \MakeFramed {\advance\hsize-\width \FrameRestore}}%
{\endMakeFramed}

\newcommand{\length}{\mbox{length}}
\newcommand{\dist}{\mbox{dist}}

\renewcommand{\G}{\mathbb G}

\newcommand{\Vias}{\mathsf{Vias}}

\newcommand{\Pack}{\mathsf{Pack}}

\newcommand{\runtime}{$O(m p^7)$\xspace} 
\newcommand{\memory}{$O(n p^3)$\xspace} 
\newcommand{\corpussize}{217\xspace} 
\newcommand{\numnetworks}{209\xspace} 
\newcommand{\nogrownetworks}{93\xspace} 

\begin{document}

\title{\Large A practical algorithm for $3$-admissibility}
\author{%
Christine Awofeso\orcidID{0009-0000-3550-1727} \and
Patrick Greaves\orcidID{0009-0007-0752-0526} \and
Oded Lachish\orcidID{0000-0001-5406-8121}
Felix Reidl\orcidID{0000-0002-2354-3003}
}
\authorrunning{C. Awofeso et al.}
\institute{Birkbeck, University of London, UK \\
\email{\{cawofe01|pgreav01\}@student.bbk.ac.uk, \{o.lachish|f.reidl\}@bbk.ac.uk}}

\maketitle

\begin{abstract}
	The $3$-admissibility of a graph is a promising measure to identify real-world networks that have an algorithmically favourable structure. 

	We design an algorithm that decides whether the $3$-admissibility of an input graph~$G$ is at most~$p$ in time~\runtime and space~\memory, where $m$ is the number of edges in $G$ and $n$ the number of vertices. To the best of our knowledge, this is the first explicit algorithm to compute the $3$-admissibility.
  
  The linear dependence on the input size in both time and space complexity, coupled
  with an `optimistic' design philosophy for the algorithm itself, makes this algorithm practicable, as we demonstrate with an experimental evaluation on a corpus of \corpussize real-world networks.

  Our experimental results show, surprisingly, that the $3$-admissibility of most real-world networks is not much larger than the $2$-admissibility, despite the fact that the former has better algorithmic properties than the latter. 
\end{abstract}

\section{Introduction}\label{sec:Intro}

Our work here is motivated by efforts to apply algorithms from sparse graph
theory to real-world graph data, in particular algorithms that work
efficiently if certain \emph{sparseness measures} of the input graph are
small. In algorithm theory, specifically from the purview of parametrized algorithms, this approach has been highly successful: by designing algorithms
around sparseness measures like treewidth~\cite{courcelleMonadicSecondorderLogic1990,arnborgEasyProblemsTreedecomposable1991,reedAlgorithmicAspectsTree2003},
maximum degree~\cite{seeseLinearTimeComputable1996}, the size of an excluded minor~\cite{demaineAlgorithmicGraphMinor2005}, or the size of a `shallow' excluded minor~\cite{BndExpII,dvovrak2010deciding}, many hard problems allow the design of approximation or parametrized algorithms with some dependence on these measures.

For real-world applications, many of the algorithmically very useful measures turn out to be too restrictive, that is, the measures will likely be too large for most practical instances. Other graph measures, such as degeneracy, might be bounded in practice but provide only a limited benefit for algorithm design. We therefore aim to identify measures that strike a balance: we would like to find measures that are small on many real-world networks \emph{and} providing an algorithmic benefit. Additionally, we would like to be able to compute such measures efficiently.

A good starting point here is the \emph{degeneracy} measure, which captures the maximum density of all subgraphs. Recall that a graph is $d$-degenerate if its vertices can be ordered in such a way that every vertex~$x$ has at most~$d$ neighbours that are smaller than~$x$. Such orderings cannot exist for \eg graphs that have a high minimum degree or contain large cliques as subgraphs.  In a survey of 206 networks from various domains by Drange \etal~\cite{drangeComplexityDegenerate2023}, it was shown that the degeneracy of most real-world networks is indeed small: it averaged about~23 with a median of~9.

Awofeso \etal identified the $2$-admissibility~\cite{Awofeso25} as a promising measure since it provides more structure than degeneracy and therefore better algorithmic properties (see their paper for a list of 
results). The $2$-admissibility is part of a family of measures called \emph{$r$-admissibility} which we define further below. The family includes degeneracy for~$r=1$, intuitively the larger the value~$r$ the `deeper' into the network structure we look. 
Awofeso \etal designed a practical algorithm to compute the $2$-admissibility and experimentally showed that this measure is still small for many real-world networks, specifically for most networks with degeneracy~$d$, the $2$-admissibility is around~$d^{1.25}$. 

Motivated by theoretical results~\cite{FOBndExp,dvorakFOBndExp13,DvorakApproxMeta2022} which imply that graphs with bounded $3$-admissibility allow stronger algorithmic results than graphs with bounded $2$-admissibility\footnote{It is hard to quantify from these algorithmic meta-results how much more `tractable' problems become, though from works like~\cite{reidlColor2023,penaSeshadhri2025} it is clear that \eg some graphs~$H$ can be counted in linear time
in graphs of bounded~$3$-admissibility, while the same is not possible (modulo a famous conjecture) in graphs of bounded~$2$-admissibility.},
we set out to design a practicable algorithm to compute the $3$-admissibility of real-world networks.

\smallskip
\noindent\textbf{Theoretical contribution} 
We design and implement an algorithm that decides whether the $3$-admissibility of an input graph~$G$ is at most~$p$ in time~\runtime using~\memory space.
This improves on a previous algorithm described by \Dvorak~\cite{dvorakDomset2013} with running time $O(n^{3p+5})$ and also beats the $3$-approximation with running time $O(pn^3)$ described in the same paper when~$p < n^{2/7}$. \Dvorak also provides a theoretical linear-time algorithm for $r$-admissibility in \emph{bounded expansion} classes (which include \eg planar graphs, bounded-degree graphs and classes excluding a minor or topological minor); however, this algorithm relies on a data structure for dynamic first-order model checking~\cite{dvorakFOBndExp13}, which certainly is not practical. Our algorithm runs in linear time as long as the $3$-admissibility is a constant; this includes graph classes where \eg the $4$-admissibility is unbounded. 

For reasons of space, we have relegated most theoretical results and their proofs as well as the detailed pseudocode of our algorithm to the Appendix. Our main result is the following:

\begin{reptheorem}{thmMain}\label{thm:Main}
	There exists an algorithm that, given a graph~$G$ and an integer~$p$, decides whether~$\adm_3(G) \leq p$ in time \runtime and space \memory.
\end{reptheorem}

\noindent
\textbf{Implementation and experiments} \\
We implemented the algorithm in Rust and ran experiments on a dedicated
machine with modest resources (2.60Ghz, 16 GB of RAM) on a corpus
of \corpussize networks\footnote{
  Both the implementation as well as the network corpus can be found
  under \url{https://github.com/chrisateen/three-admissibility}
} used by Awofeso \etal. Our algorithm was able to
compute the $3$-admissibility for all but the largest few networks in this
data set (\numnetworks completed, largest completed network is \texttt{teams} with 935K nodes). For more than half of the networks, the
computation takes less than a second, and for 89\% of these the computation took
less than ten minutes. As such, the program is even practicable on higher-end laptops.

Surprisingly, we find that the $3$-admissibility for many networks (\nogrownetworks) is \emph{equal} to the $2$-admissibility, and for the remaining network, the $3$-admissibility is never larger than twice the $2$-admissibility. We discuss why this is indeed surprising, possible explanations, and exciting potential consequences in Section~\ref{sec:experiments}. Detailed experimental results for all networks can be found in the Appendix.

\section{Preliminaries}\label{sec:prelim}

In this paper, all graphs are simple unless explicitly mentioned otherwise. 
\marginnote{$|G|$, $\|G\|$}
For a graph $G$ we use $V(G)$ and $E(G)$ to refer to its vertex set and edge set,
respectively. We use the shorthands $|G| \defeq |V(G)|$ and $\|G\| \defeq |E(G)|$.
The degree of a vertex $v$ in a graph $G$, denoted $\deg_G(v)$, is the number of neighbours $v$ has in $G$.

\marginnote{$x_1 P x_\ell$, avoids}
For sequences of vertices~$x_1,x_2,\ldots,x_\ell$, in particular paths, we use notation like~$x_1 P x_\ell$,~$x_1 Q$ and~$Rx_\ell$ to denote the  subsequences $P = x_2,\ldots,x_{\ell-1}$, $Q = x_2,\ldots,x_\ell$ and $R = x_1,\ldots,x_{\ell-1}$, respectively.
Note that further on, we sometimes refer to paths as having a
\emph{start-point} and an \emph{end-point}, despite the fact that they are not directed. We do so to simplify the reference to the vertices involved.
A path~$P$ \emph{avoids} a vertex set~$L$ if no innerr vertex of~$P$ is contained in~$L$. Note that we allow both endpoints to be in~$L$.
The length of a path $P$ is the number of edges it has and is denoted by $\length(P)$.
The distance between two vertices in a graph $G$, denoted $\dist_G(u,v)$, is the length of the shortest path in $G$ having $u$ and $v$ as endpoints.

\marginnote{$\G$, ordered graph, $\pi(G)$}
An \emph{ordered graph} is a pair $\G = (G, \leq)$ where $G$ is a graph and $\leq$ is a
total order relation on $V(G)$. 
We write $\leq_\G$ to denote the ordering of $\G$ and extend this notation to derive the relations $<_\G$,
$>_\G$, $\geq_\G$.

\section{Graph admissibility}\label{sec:Admissibility}

\newcommand{\Target}{\mathrm{Target}}

To define \emph{$r$-admissibility} we need the following ideas and notation.
\marginnote{$(r,L)$-admissible path}
	Let $\G = (G,\leq)$, $L\subseteq V(G)$ and $v \in V(G)$. 
	A path~$vPx$ is \emph{$(r,L)$-admissible} if its length $\length(vPx)\leq r$ and it avoids~$L$. 
\marginnote{\\$\Target$}
For every $v\in V(G)$, set $L\subseteq V(G)$ and integer $i > 0$ we let $\Target^i_{L}(v)$ be the set of all vertices in $x \in L$ such that $x$~is reachable from $v$ via an $(i,L)$-admissible path. Note that
$\Target^{i}_{L}(v) \subseteq \Target^{i+1}_{L}(v)$.

\marginnote{$(r,L)$-admissible packing, maximum, maximal}
An \emph{$(r,L)$-admissible packing}
is a collection of paths~$\{vP_ix_i\}_i$ with $v$ referred to as the  \emph{root}~$v$ such that every path~$vP_ix_i$ is~$(r,L)$-admissible, the paths~$P_ix_i$ are pairwise vertex-disjoint, and each endpoint~$x_i \in \Target^r_L(v)$. Note that in particular, all endpoints~$\{x_i\}_i$ are distinct. We call such a packing \emph{maximum} if there is no larger $(r,L)$-packing with the same root and \emph{maximal} if the packing cannot be increased by adding a~$(r,L)$-admissible path from~$v$ to an unused vertex in $\Target^r_L(v)$.
We often treat $(r,L)$-admissible packings as trees rooted at~$v$ and use terms such as `parent', `child' or `leaf'. 

An example of a $3$-admissible packing is depicted in Figure~\ref{fig:admissibility}.
We write $\pp^r_L(v)$ to denote the maximum size of any $(r,L)$-admissible packing rooted at~$v$.

\begin{figure}[!t]
	\centering\includegraphics[width=0.75\textwidth]{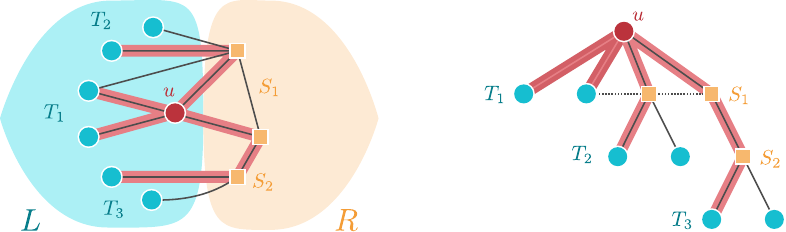}
	\caption{\label{fig:admissibility}%
		On the left, a maximal $(3,L)$-admissible packing rooted at~$u$ as well as the sets
		$T_1 \defeq \Target^1_L(u)$, $T_2 \defeq \Target^2_L(u) \setminus T_1$,
		and~$T_3 \defeq \Target^3_L(u) \setminus (T_1 \cup T_2)$. The sets~$S_i$ contain vertices in~$R$ whose shortest~$(2,L)$-admissible path to~$u$ has length~$i$. On the right, the same local subgraph but embedded in a tree of height~$3$.
	}
\end{figure}

\marginnote{$\pp^r_L$, $\pp^r_\G$}
Given an ordered graph~$\G$, we define $\pp^r_\G(v)$ to be $\pp^r_L(v)$, where $L = \{u \in V(G) \mid u \leq_\G v\}$. 
\marginnote{$\adm_r(\G)$, $\adm_r(G)$}
The \emph{$r$-admissibility} of an ordered graph~$\G$, denoted $\adm_r(\G)$ and the admissibility of an unordered graph~$G$, denoted $\adm_r(G)$ are\looseness-1\footnote{%
Note that some authors choose to define the admissibility as 
	$1 + \max_{v \in \G} \pp^r_\G(v)$ as this matches some other related measures.
} 
\begin{align*}
\adm_r(\G) &\defeq \max_{v \in \G} \pp^r_\G(v) \\
\text{and}~~\adm_r(G) &\defeq \min_{\G \in \pi(G)} \adm_r(\G),
\end{align*}
where $\pi(G)$ is the set of all possible orderings of~$G$.

\noindent
\marginnote{Admissibility ordering}%
If~$\G$ is an ordering of~$G$ such that $\adm_r(\G) = \adm_r(G)$, then we call $\G$ an \emph{admissibility ordering} of~$G$. The $1$-admissibility of a graph coincides with its degeneracy. For~$r \geq 2$, an optimal ordering can also be computed in linear time in $n$ if the class has \emph{bounded expansion}, \ie if the graph class has bounded admissibility for \emph{every}~$r$ (see~\cite{dvorakDomset2013}).

The following fact is a simple consequence of the fact that every $(p,r)$-admissible ordering
is, in particular, a $p$-degeneracy ordering.
\begin{fact} \label{fact:num-edges} 
	If $G$ is $(p,r)$-admissible, then $|E(G)|\le p\cdot|V(G)|$.
\end{fact}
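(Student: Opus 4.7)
The plan is to exploit the hint that every $r$-admissibility ordering is also a degeneracy ordering with the same parameter, and then apply the classical observation that $d$-degenerate graphs have at most $d\cdot|V(G)|$ edges.

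First, let $\G=(G,\leq_\G)$ be an admissibility ordering witnessing $\adm_r(G)\leq p$, so that $\pp^r_\G(v)\leq p$ for every $v\in V(G)$. Fix any vertex $v$ and let $L=\{u\in V(G)\mid u\leq_\G v\}$ be the prefix up to $v$. For every neighbour $x$ of $v$ with $x<_\G v$, the single edge $vx$ is a path of length $1\leq r$ whose only inner vertices (none) trivially avoid $L$, and whose endpoint $x$ lies in $\Target^1_L(v)\subseteq\Target^r_L(v)$. The collection of all such single-edge paths is pairwise vertex-disjoint away from the root, so it forms an $(r,L)$-admissible packing rooted at $v$ of size equal to the back-degree of $v$ in $\G$.

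Consequently the back-degree of $v$ is at most $\pp^r_\G(v)\leq p$. Summing over all vertices and observing that every edge of $G$ is counted exactly once, namely as a back-edge at its $\leq_\G$-larger endpoint, yields
\[
|E(G)|\;=\;\sum_{v\in V(G)}|\{x\in V(G):x<_\G v,\,vx\in E(G)\}|\;\leq\;p\cdot|V(G)|.
\]

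The only step that requires any thought is the first one, and even that is essentially the observation that $1\leq r$ combined with the fact that length-$1$ paths from $v$ to distinct smaller neighbours are automatically vertex-disjoint inner-wise and have distinct endpoints. No real obstacle arises.
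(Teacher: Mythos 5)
Your proof is correct and follows exactly the route the paper indicates: it makes explicit that in an admissibility ordering the back-degree of every vertex is at most $p$ (via the packing of single-edge paths to smaller neighbours), i.e.\ that a $(p,r)$-admissible ordering is a $p$-degeneracy ordering, and then counts each edge at its larger endpoint. The paper states the fact as an immediate consequence of this same observation, so your write-up simply fills in the routine details.
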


\noindent
The following well-known facts about $r$-admissibility are important for the algorithm we present. These facts hold for all values of~$r$, although we will only need them for~$r \leq 3$. Recall that in a $d$-degenerate graph, we can always find a vertex of degree~$\leq d$. The first lemma shows that a similar property holds in graphs of bounded $r$-admissibility:

\begin{replemma}{lemmaRadm}\label{lemma:Greedy}
	A graph $G$ is $(p,r)$-admissible if and only if, for every nonempty $L
	\subseteq V(G)$, there exists a vertex $u \in V(G)\setminus L$ such that $\pp^r_{L}(u) \leq p$.
\end{replemma}
\begin{proof}
	Suppose first that for every nonempty $L\subseteq V$, there exists a vertex $u \in L$ such that $\pp^r_{V\setminus L}(u) \leq p$.
	Then, an $r$-admissible order of $G$ can be found as follows: first 
	initialise the set $L$ to be equal to $V$ and $i$ to $|G|$, and then repeat the following two steps until $L$ is empty: 
	(1) find a vertex $u \in L$ such that $\pp^r_{V\setminus L}(u) \leq
	 p$ removing it from $L$ and adding it in the $i$'th place of the order
	(2) decrease $i$ by $1$.
	 
	We note that by construction, the $r$-admissibility of the order we got is at most $p$.	Thus, $G$ has $r$-admissibility $p$. 
	
	Suppose that $G$ has $r$-admissibility $p$. Then, there exists an ordering, $\G$ of $V(G)$ such that $\adm_r(\G) \leq p$.
	Let $u_1, u_2, \ldots u_{|G|}$ be the vertices of $\G$ in order. 	
	Let $u_k\in U$ be the vertex with the maximum index in $L$ and define~$U \defeq \{u_1, \ldots, u_k\}$. We note that $L\subseteq U$.
	
	Assume for the sake of contradiction that $\pp^r_L(u_k) > p$. 
	Then, there exists an $(r,L)$-admissible packing $H$ rooted at 
	$u_k$ of size $p+1$. By definition, every path in $H$ starts in $u_k$ and ends in a vertex in~$L$. 
	
	Thus, since $L \subseteq U$, every path $P$ in $H$ either already avoids $U$ or includes a vertex from $U$.	If the second case holds for such a path $P$ then it has the form~$u_kP_1yP_2$ with~$y \in U$
	and~$P_1 \cap U = \emptyset$. Replacing~$P$ by~$u_kP_1y$ and repeating this step for each path that contains a vertex from~$U$ results in a $(r,U)$-admissible packing~$H'$ with~$|H'| = |H|$. Thus~$\pp^r_U(u_k) > p$, contradicting our assumption that~$\G$ has~$\adm_r(\G) \leq p$. 
    \qed
\end{proof}

\noindent
The second lemma allows us to conclude that if during the algorithm run we find a vertex that does not have a large $(3,L)$-path packing for some set~$L$, then we know that this property will hold even if the set~$L$ shrinks in the future:

\begin{replemma}{lemmaCand}\label{lemma:Candidates}
    Let $G$ be a graph, and let $L' \subseteq L \subseteq V(G)$.
    Then for every $v \in V(G)$, we have $\pp^r_{L'}(v) \leq \pp^r_{L}(v)$.
\end{replemma}
\begin{proof}
	Fix a non-empty set~$L \subseteq V(G)$ and an arbitrary vertex~$u \in L$, let~$L' = L \setminus \{u\}$. We show that in this case $\pp^r_{L'}(v) \leq \pp^r_L(v)$ for all~$v \in V(G)$, the claim then follows by induction.

	Assume towards a contradiction that~$\mathcal P'$ is a $(r,L')$-path packing rooted at~$v$ of size~$s > \pp^r_L(v)$. If~$u$ does not appear in~$\mathcal P'$, then $\mathcal P'$ is a $(r,L)$-path packing of size~$s$, a contradiction. The same is true if~$u = v$. Assume therefore that~$P \in \mathcal P'$ contains~$u$ and~$u \neq v$. Let~$Q$ be the segment of~$P$ that starts in~$v$ and ends in~$u$, clearly $|Q| \leq |P| \leq r$ and~$Q$ avoids~$L$. Then $\mathcal P = \mathcal P' \setminus P \cup \{Q\}$ is a
	$(r,L)$-path packing of size~$s$, again contradicting that~$s > \pp^r_L(v)$.
	We conclude that $\pp^r_{L'}(v) \leq \pp^r_L(v)$ and the claim follows.
    \qed
\end{proof}

\noindent
The final lemma is a well-known bound between the sizes of path-packing and the size of the target sets, slightly adapted for our purposes here:

\begin{replemma}{lemmaTargetBound}\label{lemma:TargetBound}
	Fix integers $p, r \geq 1$. Let~$G$ be a $(p,r)$-admissible graph and $L,R$ a partition of $V(G)$ such that for all $u \in R$, $\pp^r_L(u) \leq p$.
	Then for~$u \in R$, $|\Target^r_L(u)| \leq p^r$ and for $v \in L$, $|\Target^r_L(v)| \leq |N_R(v)|(p-1)^{r-1}$.
\end{replemma}
\begin{proof}
	Consider~$u \in R$ first and let~$\Gamma$ be a tree constructed from the shortest $(r,L)$-admissible paths from each vertex in~$|\Target^r_L(u)|$ to~$u$. For every interior vertex~$x \in \Gamma$, we can construct a $(r,L)$-admissible packing by taking one path through each child of~$x$ to a leaf of~$\Gamma$. Since~$\pp_L^r(x) \leq p$, $x$ cannot have more than~$p$ children in~$\Gamma$. The same logic applies to the root, thus~$\Gamma$ has at most~$p^r$ leaves, which therefore bounds~$|\Target^r_L(u)|$.

	We apply the same trick to~$v \in L$, except that for each interior vertex~$x \in \Gamma$, we can also add a path from~$x$ to~$v$ to the packing, hence~$x$ has at most~$p-1$ children. Consequently, $|\Target^r_L(v)|$ contains at most~$|N_R(v)| (p-1)^{r-1}$ vertices.
    \qed
\end{proof}

\begin{figure}[!t]
  \centering\includegraphics[width=0.95\textwidth]{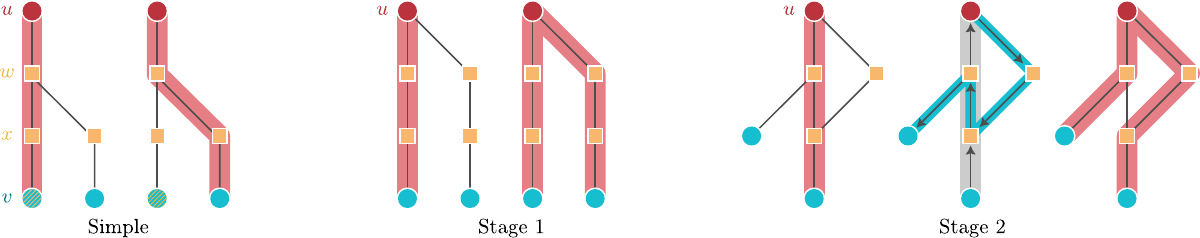}
  \caption{\label{fig:path-packing}%
      The three `escalations' of increasing path packings. During a Simple update (left), the path $uwxv$ is lost since~$v$ moved to~$R$, and the
      algorithm attempts to find a replacement. For small packings, Stage~1 attempts to find disjoint paths rooted at~$u$. If this does not increase the packing size, Stage~2 constructs a suitable flow network to either increase the packing size, or prove that the current packing is maximum.
  }
\end{figure}

\section{Algorithm Overview}
\marginnote{\Main~algorithm}
We provide here a description of how the Algorithm works.
The \Main~algorithm uses an Oracle that iteratively returns the next vertex in the admissibility order (starting at the last vertex and ending with the first). Due to space constraints, we only provide a high-level description of this Oracle here. A formal description, proof of correctness, and analysis of complexity can be found in the appendix.

The input to the \Main~algorithm is a graph~$G$ and an integer~$p$.
We assume that $\|G\| \leq pn$, since this can easily be checked, and if it does not hold, then by Fact~\ref{fact:num-edges}, the $3$-admissibility number of $G$ is strictly greater than $p$. 

Let us for now assume that we have access to an Oracle that, given a subset $L$ of $V(G)$, can provide us with a vertex~$v \in L$
such that~$\pp^3_L(v) \leq p$ if such a vertex exists. With the help of this Oracle, the following greedy algorithm (Algorithm~\ref{alg:Main}) returns an ordering $\G$ such that $\adm_3(\G)\leq p$ if such an order exists and otherwise returns FALSE:
\begin{algorithm2e}[htp]
	\SetAlgoRefName{\texttt{Update}}
	\DontPrintSemicolon
	\KwIn{A graph $G$ and a parameter $p \in [|G|]$}
	Initialise $L \defeq V(G)$,~$R \defeq \emptyset$,~$i \defeq |G|$, and the Oracle.\;
	\While{$L \neq \emptyset$}{ 
		  Ask the Oracle to return a vertex $v \in L$ such that~$\pp^3_L(v) \leq p$.\;
            \If {the Oracle returned FALSE instead of a vertex}{
                return FALSE\;
            }	
            set $v_i \defeq v$\;
            remove $v$ from $L$\;
            prepend $v$ to $R$\;
            set $i \defeq i-1$\;
        }
        Return $R$.
	\caption{\label{alg:Main}%
	Returns a $(p,3)$-admissible ordering of $G$ if one exists and otherwise returns FALSE.}
\end{algorithm2e}

\noindent
This greedy strategy works because of Lemma~\ref{lemma:Greedy}.

Given the Oracle, implementing the above algorithm is straightforward.
The running time of the above algorithm is $O(n)$ plus the overall running time used by the oracle. The same applies for space complexity.
Thus, the problem of finding a $(p,3)$-admissible ordering of a graph $G$ is reduced to the problem of implementing the Oracle, which we outline below after making some structural observations on admissible packings.

\subsection{The structure of $(3,L)$-admissible packings}
\marginnote{covering, chordless}
We need two properties of $(3,L)$-admissible packings that play a central role in the algorithm. Let~$G$ be a graph and $L,R$ a partition of~$V(G)$ and let~$H$ be a $(3,L)$-admissible packing rooted at~$v \in L$. 
Then we call~$H$ \emph{chordless} if for every path~$vwP \in H$ (with $P$ potentially empty) there is no edge between~$v$ and~$P$.

We call~$H$ \emph{covering} if for every vertex $x \in \Target^3_L(v)$ either~$x \in N(v)$ and the path~$xv$ is in $H$, or there exists a $(3,L)$-admissible path from~$x$	to~$v$ which intersects~$V(H)\setminus \{v\}$. In other words, every target vertex of~$v$ is either in the packing or has at least one $(3,L)$-admissible path to~$v$ which intersects~$H$ in a vertex other than~$v$. 

The important observation here is that if for~$v \in L$ there exists a $(3,L)$-packing rooted at~$v$ of size~$k$, then it also has a chordless packing of the same size. If~$vwP \in H$ has a cord, then we can replace~$vwP$ by a shorter path with the same endpoint that is completely contained within~$vwP$.
This observation is already enough to show that one can implement an Oracle for 3-admissibility that works in polynomial time; the following construction provides important intuition and is key in our proof for a much faster Oracle.

\begin{definition}[Packing Flow Network]\label{def:PackingFlowNetwork}
	Let~$L,R$ be a partition of~$V(G)$. For~$u \in L$, the \emph{packing flow network}~$\Pi$ is a directed flow network constructed as follows: start a BFS from~$u$ which stops whenever it encounters a vertex in~$L$ (except~$u$) and stop the process after three steps. Remove all vertices discovered in the third step that are in~$R$. 

	Call the vertices discovered in the $i$th  step~$S_i$ and~$T_i$ ($i \in \{0,\ldots,3\}$), where~$S_i \subseteq R$ to~$T_i \subseteq L$, with~$T_0 = \{u\}$ and~$S_0 = S_3 = \emptyset$. The arcs of~$\Pi$ are the edges of~$G$ from layer~$T_0$ to~$T_1 \cup S_1$, $S_1$ to~$S_2 \cup T_2$, and from~$S_2$ to~$T_2 \cup T_3$.

	The source of the flow network is~$u$ and the sinks are~$T_1 \cup T_2 \cup T_3$. We set the capacities to one for all arcs as well as all vertices, with the exception of~$u$ which has infinite capacity.
\end{definition}

\begin{lemma}\label{lemma:FlowPacking}
	Let~$L,R$ be a partition of~$V(G)$ and let~$\Pi$ be the packing flow network for~$u \in L$. Then there is a one-to-one correspondence between integral flows on~$\Pi$ and chordless $(3,L)$-admissible packings rooted at~$u$.
\end{lemma}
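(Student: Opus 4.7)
The plan is to exhibit a pair of inverse explicit maps between chordless $(3,L)$-admissible packings rooted at~$u$ and integral flows on~$\Pi$, and verify each direction in turn.

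The first step is a structural analysis of~$\Pi$: because of the arc layout in Definition~\ref{def:PackingFlowNetwork}, every directed path from~$u$ to a sink has one of the shapes (i) $u \to x$ with $x \in T_1$, (ii) $u \to y \to x$ with $y \in S_1$ and $x \in T_2$, or (iii) $u \to y \to z \to x$ with $y \in S_1$, $z \in S_2$, and $x \in T_2 \cup T_3$. Since $y, z \in R$ and $x \in L$ in every case, each such path is automatically $(3,L)$-admissible. Conversely, every path $uP_ix_i$ in a chordless packing $\mathcal{P}$ matches one of these shapes: the chordless condition forbids any internal vertex, or the endpoint of a path of length~$\geq 2$, from lying in $N(u) = T_1 \cup S_1$, which, combined with the fact that internal vertices avoid~$L$, pins down the layer of every vertex along the path.

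For the forward map I send one unit of flow along each $uP_ix_i$; the vertex-disjointness of the $P_ix_i$ and the distinctness of endpoints ensure that every arc and every non-source vertex carries at most one unit, so the unit capacities are respected ($u$ has infinite capacity, so there is no bound on total outgoing flow). For the backward map, I rely on the observation that integrality together with capacity~$1$ on every non-source vertex forces the flow through any such vertex to be either~$0$ or~$1$; in the latter case, the single saturated incoming and outgoing arcs are uniquely determined. Tracing back from each saturated sink therefore yields a unique source-to-sink path, and the resulting collection of paths is pairwise vertex-disjoint off of~$u$. Each path is $(3,L)$-admissible by the shape analysis, and chordlessness is automatic, because every non-initial vertex in shapes~(ii) and~(iii) has a layer disjoint from $N(u) = T_1 \cup S_1$.

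It remains to check that the two maps are mutually inverse, which reduces to the uniqueness of the path decomposition of an integral unit-capacity flow on the DAG~$\Pi$ (forward then backward recovers the packing) and to the additivity of unit flows along disjoint paths (backward then forward recovers the flow values on each arc). The main obstacle is really the shape analysis at the outset: matching the combinatorial chordless condition against the BFS-based layering of~$\Pi$, and in particular allowing a length-$3$ path in the packing to legitimately terminate in~$T_2$ via the arc $S_2 \to T_2$ rather than only in~$T_3$.
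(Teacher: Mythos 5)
Your proposal is correct and follows essentially the same route as the paper's proof: unit vertex capacities give the disjoint path decomposition of an integral flow, the BFS layering of~$\Pi$ yields admissibility and chordlessness, and chordlessness conversely guarantees that every packing path survives as a directed path in~$\Pi$; you merely spell out the shape analysis and the mutual-inverse check more explicitly. One small wording slip: the chordless condition does not exclude the \emph{first} internal vertex from $N(u)$ (it necessarily lies in $S_1 \subseteq N(u)$), only the vertices strictly after it, which is in fact what your shapes (ii) and (iii) use.
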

\begin{proof}
	To see that an integral flow corresponds to a packing, note that since the vertices have a capacity of one, every vertex except~$u$ has at most one incoming and one outgoing unit of flow. The saturated arcs therefore form a collection of paths all starting at~$u$ and ending at $T_1 \cup T_2 \cup T_3$. The internal vertices of these paths all lie in~$S_1 \cup S_2$ and since the maximum distance from~$u$ is three, we conclude that all paths are indeed~$(3,L)$-admissible and thus form a $(3,L)$-admissible packing rooted at~$u$.
	Finally, since there are no arcs within~$S_1$ and no arcs from~$u$ to~$T_2 \cup S_2 \cup T_3$, we conclude that the packing is chordless.

	In the other direction, we can convert any chordless $(3,L)$-admissible packing rooted at~$v$ into an integral flow by sending one unit of flow along each path. Since the packing is chordless, all edges are present as arcs in~$\Pi$.
    \qed
\end{proof}

%
\begin{figure}[tbh]
  \hspace*{-7pt}\includegraphics[width=.51\textwidth]{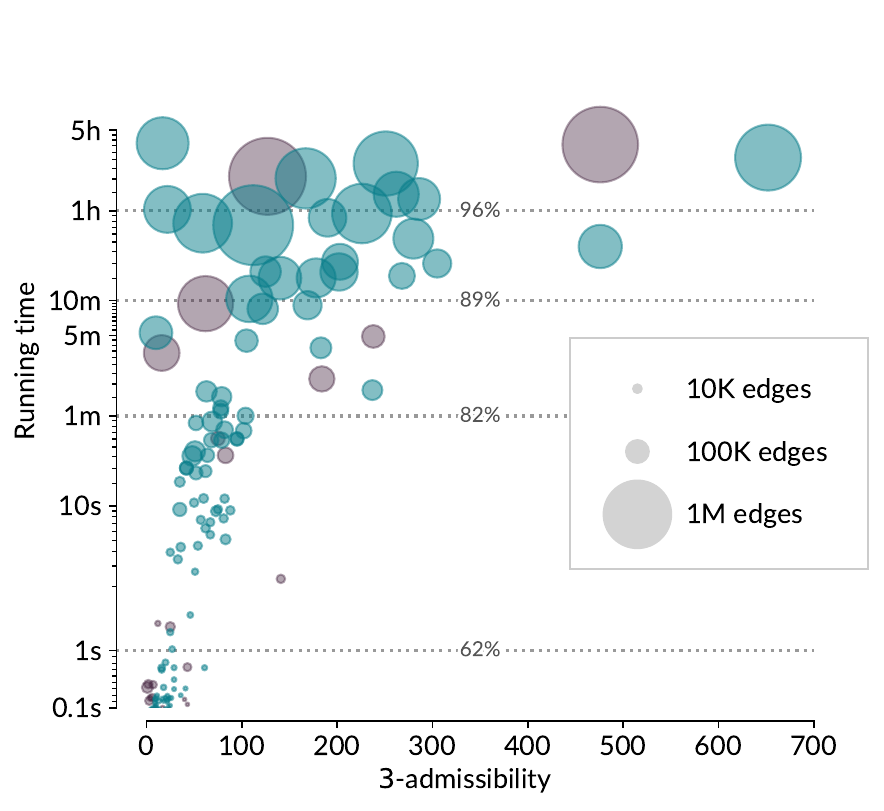}\hspace*{4pt}%
  \includegraphics[width=.51\textwidth]{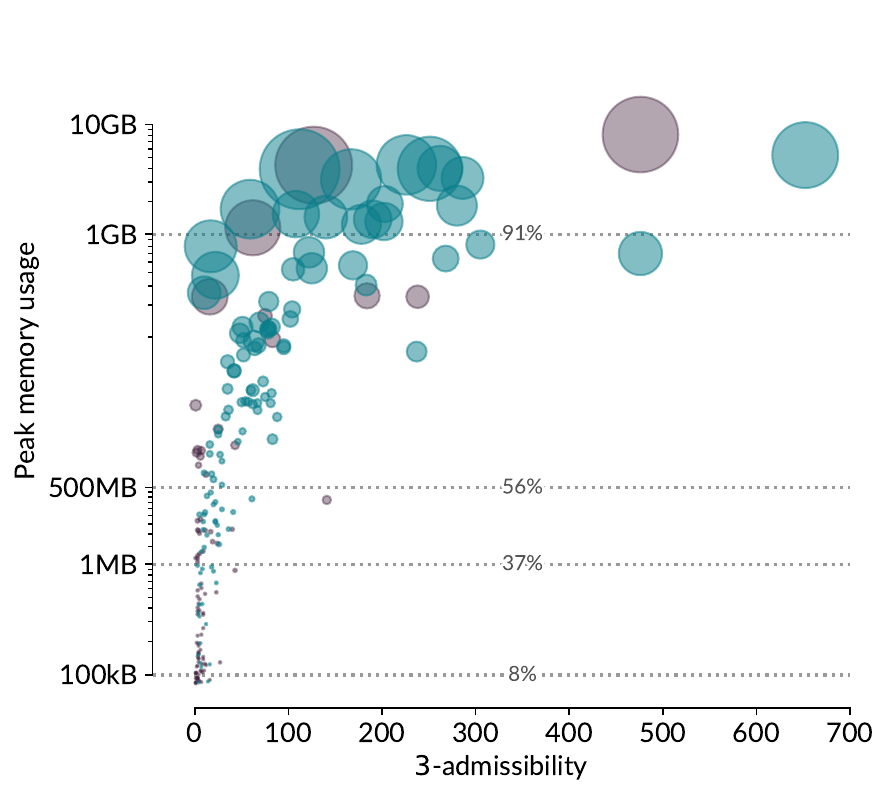}
  \caption{\label{fig:resources}%
    Running time (left) and peak memory consumption (right) of our experiments. Networks where $\adm_2 = \adm_3$ are coloured purple, all other teal. The marker size indicates the number of edges in the network, the horizontal lines show the number of data points below certain interesting thresholds.  }
\end{figure}

\subsection{The Oracle}~\\
\noindent The Oracle has access to the input graph $G$ and the value of $p$.

As already mentioned on every call to the Oracle except for possibly the last, the Oracle returns a vertex.
In the following, the set $L$ is a variable of the oracle that contains all the vertices that the Oracle has yet to return and the set $R$ is a variable that contains all the vertices that the oracle has already returned, \ie $R= V(G)\setminus L$.
We note that the set $L$ has the same role as previously defined and it can be assumed that the oracle maintains this set and not the algorithm that calls it.

When queried, the Oracle must return a vertex $v$ such that $\pp^3_L(v) \leq p$ or FALSE if no such vertex exists.
To do so, the Oracle maintains a set $\Candidates$ and ensures that before every query this set contains exactly the vertices $v\in L$ with $\pp^3_L(v) \leq p$. Hence, the Oracle returns FALSE if~$\Candidates$ is empty and otherwise an arbitrary vertex from this set. In the latter case, the returned vertex is removed from~$\Candidates$ and $L$ and then added to $R$. The Oracle further updates all its data structures to be consistent with the new value of~$L$ and~$R$. 

The challenging part for the Oracle is to ensure that the correct vertices are added to~$\Candidates$.
Note that a vertex will only be removed from $\Candidates$ when it is returned by the Oracle; this is because of the following property: when a vertex $v\in L$ is added to $\Candidates$, we have
$\pp^3_L(v) \leq p$. Since vertices are only removed from the set $L$, Lemma~\ref{lemma:Candidates} ensures that from then on $\pp^3_L(v) \leq p$ will hold in the future.
Next, we explain how the oracle ensures that it adds the correct vertices to $\Candidates$.

In the initialisation stage, before the Oracle receives its first query, $L$ contains all vertices and therefore~$\pp^3_L(v) = \pp^3_{V(G)}(v) = \deg_G(v)$.
Indeed, initially, the Oracle adds to $\Candidates$ all the vertices $v\in L$ such that $\deg_G(v)\leq p$. 
We conclude that the first answer provided by the Oracle is always correct and uses at most $O(n)$ time and space.

The rest of this section is dedicated to explaining if the contents of $\Candidates$ were correct before some call to the Oracle; the Oracle can efficiently ensure that the contents of $\Candidates$ are correct before the next call. Taken together with the observation that the state of the Oracle is initially correct, this implies inductively that the Oracle works correctly.

For every vertex $v \in L\setminus \Candidates$, the oracle maintains a $(3,L)$-admissible packing $\Pack(v)$, which is updated every time a vertex is removed from $L$. For these packings, we maintain two invariants, namely that they are all \emph{covering} and \emph{chordless}. 

Once a vertex is added to $\Candidates$, the oracle stops maintaining its packing until it is removed from $\Candidates$ and added to $R$. 
At this stage, the existing packing is discarded and a new maximal $(2,L)$-packing is computed for the vertex and maintained. The Oracle guarantees that these packings are \emph{chordless}, and the \emph{covering} property is implied by their maximality. As these packings only decrease in size over time and initially contain at most~$p$ paths, operations on these packings are cheap.

Let us now discuss how the Oracle decides when to add a vertex to $\Candidates$.
Every time a vertex $v$ is moved from $L$ to $R$, the oracle updates all the packings of vertices in $R$ and $L\setminus \Candidates$ which include~$v$. This includes trying to add `replacement' paths in case the move of~$v$ invalidated a path; these replacements ensure that the packing invariants are maintained. If for a vertex~$u \in L$ the packing size could not be increased
in this way, and the packing size has reached~$p$, then the Oracle `escalates'
by attempting to add further paths in more computationally expensive ways (see Figure~\ref{fig:path-packing}). First, it attempts to find a path that intersects the current packing only in~$u$. If it finds such a path, it adds it to $\Pack(u)$ which increases its size to $p+1$ and $u$ is not added to $\Candidates$ in this round. If such a path does not exist, then~$\Pack(u)$ is a \emph{maximal} packing. The Oracle then attempts to increase the packing size by constructing a small auxiliary flow graph and augmenting a flow corresponding to the current packing. Here, our theoretical contribution is to show how to construct a small flow graph that mimics the properties of the complete packing flow network (Definition~\ref{def:PackingFlowNetwork}). If the flow increases, the Oracle recovers a $p+1$ packing for~$u$ and does not add~$u$ to $\Candidates$ in this round. If the flow cannot be increased further, the packing $\Pack(u)$ is already maxim\emph{um}---no $(3,L)$-admissible packing rooted at~$u$ of size larger than~$p$ exists; hence $u$ is added to $\Candidates$.

This approach ensures that the Oracle only resorts to performing the costly flow computation if the
current packing is already small. The invariants of chordless and covering $(3,L)$-admissible packings are central here, since it brings the following advantages:
\begin{enumerate}[i.]
	\item Given a covering $(3,L)$-admissible packing rooted at~$v\in L$, when moving $v$ from $L$ to $R$ the vertices in $L$ whose packing need to be updated can be efficiently found with the help of a maximal $(2,L)$-admissible packings stored for vertices in~$R$.
\item Updating a chordless, covering $(3,L)$-admissible packing can be done efficiently when the size of the packing is strictly larger than $p+1$.
\end{enumerate}

\noindent
For the pseudocode of the various parts of the algorithms, proof of correctness, and running time, we refer the reader to the appendix.

\section{Experimental evaluation}\label{sec:experiments}

We implemented the algorithm in Rust (2024 edition, \texttt{rustc} version 1.88.0) and ran experiments on a dedicated machine with 2.60Ghz AMD Ryzen R1600 CPUs and 16 GB of RAM. To optimise performance, we used the compile flag \texttt{target-cpu=native} and
settings \texttt{codegen-units = 1}, \texttt{lto = "fat"}, \texttt{panic = "abort"}, and \texttt{strip = "symbols"} to minimize the final binary size.

Of the \corpussize networks in the corpus, our experiments were completed on \numnetworks.

\subsection*{Computing the $3$-admissibility is practicable}
\noindent
The `optimistic' design of the algorithm, which avoids expensive computations like the flow augmentation as much as possible, resulted in a practical implementation that computed the $3$-admissibility on networks with up to 592K nodes. Figure~\ref{fig:resources} summarises the results: more than half of the networks finished in less than a second, 82\% in under a minute (including a network on 33K nodes), and 89\% in under ten minutes (including a network on 225K nodes).
Memory usage is also modest by modern standards, with more than half of the networks needing less than 500MB and 91\% needing less than 1GB.

This means that our implementation can run even on a modest laptop for quite large networks. As a point of comparison, the computation for the \texttt{offshore} network (278K nodes, 505K edges) ended in about 40 minutes on a laptop with a 2.40GHz Intel i5-1135G7 processor while using about 500MB RAM.

\begin{figure}[tb]
  \centering\includegraphics[width=.8\textwidth]{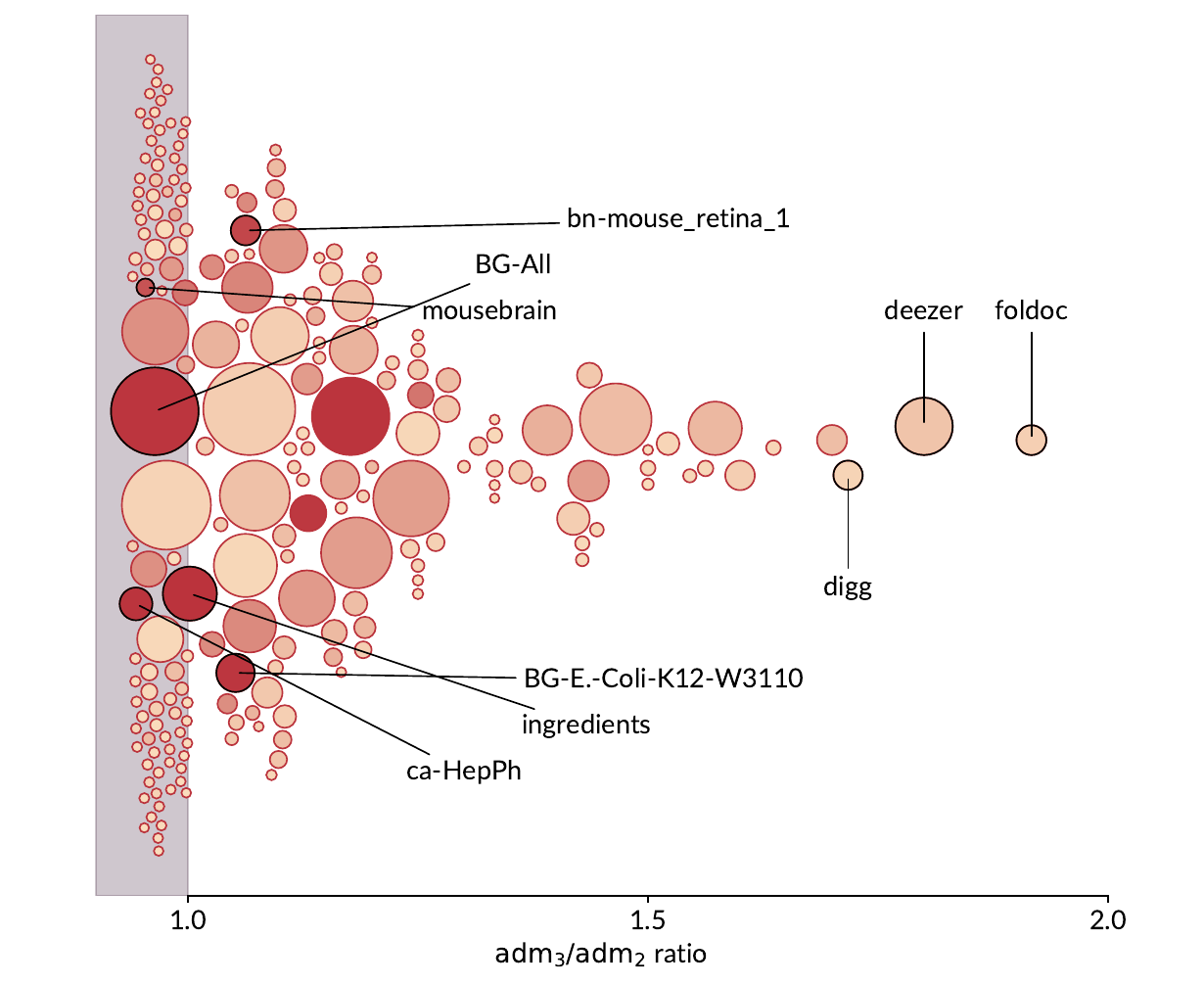}
  \caption{\label{fig:ratio}\small%
    The horizontal position indicates the ratio $\adm_3 / \adm_2$, networks in the grey strip all satisfy~$\adm_3 = \adm_2$ and are distributed horizontally to keep the plot area small. The colour indicates the network degeneracy (as a measure of density), higher values are darker. Networks with the highest ratio are labelled, as well as those networks which have a high density but a low $\adm_3 / \adm_2$ ratio.
  }
\end{figure}

\subsection*{The $3$-admissibility is surprisingly small}
\noindent
To our surprise, the $3$-admissibility for all tested networks is not much larger than the $2$-admissibility. In fact, for 90 of the networks, both values are the same, and for the remaining 110 networks, the $3$-admissibility is less than twice the $2$-admissibility (the largest factor in the data set is 1.91). 
Figure~\ref{fig:ratio} visualizes these ratios, Figure~\ref{fig:resources} shows the absolute values for~$\adm_3$.

We find these results surprising for two reasons. First, experimental measurements of a related measure, the \emph{weak $r$-colouring number} $\wcol_r$, performed by Nadara \etal~\cite{WcolExperimental} showed that $\wcol_3$ was substantially larger than~$\wcol_2$ across most instances. Second, Awofeso \etal~\cite{Awofeso25} showed that the $2$-admissibility is about~$d^{1.25}$, where~$d$ is the degeneracy (the $1$-admissibility) of the network, so we expected to see a similar relation going from~$r = 2$ to~$r = 3$.

There are two plausible explanations for this observation. For some networks, we
could be seeing a plateau at~$r = 3$, \eg; for some~$r \geq 4$, we would see an
increase again. This would be plausible in \eg road networks or other infrastructure
networks which contain longer paths connecting hub vertices; however, then we would expect to see a similar effect in the experiments by Nadara \etal.

The second explanation is that this is indeed the maximum value for \emph
{any}~$r$, which would have quite significant implications about the
structure of such networks: graphs with bounded `$\infty$-admissibility' can be
thought of as gluing together graphs of bounded degree with a constant number
of high-degree vertices added in (see the recent survey by Siebertz~\cite
{siebertzSurvery2025} for a good overview). This is consistent with the
observations by Nadara~\etal, since graphs of bounded degree will have $\wcol_r$ increasing with~$r$, while~$\adm_r$ would be bounded by a universal constant since a path-packing rooted at some vertex~$v$ is always limited by the degree of~$v$.

\section{Conclusion}

We demonstrated that a careful algorithm design and an `optimistic' approach resulted in a resource-efficient implementation to compute the $3$-admissibility of real-world networks.

Our experiments not only demonstrate that the implementation is of practical use, but also that the $3$-admissibility of all \numnetworks networks was surprisingly low; for almost half, it was even equal to the $2$-admissibility. As we outlined above, it is likely that the $r$-admissibility of many real-world networks is already maximal for $r=2$, which has interesting implications for the structure of such networks. 

In the future, we intend on investigating whether the structure theorem of graphs with bounded `$\infty$-admissibility' indeed applies to real-world networks in a meaningful way, as suggested by these experimental results. An important step will be the design of a comparable algorithm to compute the $4$-admissibility, using the lessons learned in this work.

\bibliographystyle{siam}
\bibliography{biblio}

@inproceedings{Awofeso25,
  author       = {Christine Awofeso and
                  Patrick Greaves and
                  Oded Lachish and
                  Felix Reidl},
  editor       = {Petra Mutzel and
                  Nicola Prezza},
  title        = {A Practical Algorithm for 2-Admissibility},
  booktitle    = {23rd International Symposium on Experimental Algorithms, {SEA} 2025,
                  July 22-24, 2025, Venice, Italy},
  series       = {LIPIcs},
  volume       = {338},
  pages        = {3:1--3:19},
  publisher    = {Schloss Dagstuhl - Leibniz-Zentrum f{\"{u}}r Informatik},
  year         = {2025},
  doi          = {10.4230/LIPICS.SEA.2025.3},
}

@article{WCol,
  author    = {H. A. Kierstead and
               D. Yang},
  title     = {Orderings on Graphs and Game Coloring Number},
  journal   = {Order},
  volume    = {20},
  number    = {3},
  pages     = {255--264},
  year      = 2003,
}

@article{BndExpII,
  title = {Grad and Classes with Bounded Expansion {{II}}. {{Algorithmic}} Aspects},
  author = {Ne{\v s}et{\v r}il, Jaroslav and {Ossona de Mendez}, Patrice},
  journal   = {Eur. J. Comb.},
  year = {2008},
  volume = {29},
  number = {3},
  pages = {777--791},
  doi = {10.1016/j.ejc.2006.07.014},
  bibsource = {dblp computer science bibliography, https://dblp.org},
  biburl = {https://dblp.org/rec/journals/ejc/NesetrilM08a.bib},
  timestamp = {Fri, 12 Feb 2021 13:40:41 +0100}
}

@article{WcolExperimental,
  title={Empirical evaluation of approximation algorithms for generalized graph coloring and uniform quasi-wideness},
  author={W. Nadara and M. Pilipczuk and R. Rabinovich and F. Reidl and S. Siebertz},
  journal      = {{ACM} J. Exp. Algorithmics},
  volume       = {24},
  number       = {1},
  pages        = {2.6:1--2.6:34},
  year         = {2019}
}

@inproceedings{FOBndExp,
  author    = {Z. \Dvorak and
               D. \Kral and
               R. Thomas},
  title     = {Deciding First-Order Properties for Sparse Graphs},
  pages     = {133--142},
  year      = 2010,
  booktitle = {2010 IEEE 51st Annual Symposium on Foundations of Computer Science},
  organization={IEEE}  
}

@inproceedings{dvovrak2010deciding,
	title={Deciding first-order properties for sparse graphs},
	author={\Zdenek \Dvorak and Kr{\'a}l, Daniel and Thomas, Robin},
	booktitle={2010 IEEE 51st Annual Symposium on Foundations of Computer Science},
	pages={133--142},
	year={2010},
	organization={IEEE}
}

@inproceedings{DvorakApproxMeta2022, 
  series={LIPIcs}, 
  title={Approximation Metatheorems for Classes with Bounded Expansion}, 
  volume={227}, 
  DOI={10.4230/LIPIcs.SWAT.2022.22}, 
  booktitle={18th Scandinavian Symposium and Workshops on Algorithm Theory, SWAT 2022, June 27-29, 2022, Tórshavn, Faroe Islands}, 
  publisher={Schloss Dagstuhl - Leibniz-Zentrum für Informatik}, 
  author={\Dvorak, \Zdenek},
  pages={22:1-22:17}, 
  collection={LIPIcs},
  year = 2022
}

@article{arnborgEasyProblemsTreedecomposable1991,
  title = {Easy Problems for Tree-Decomposable Graphs},
  author = {Arnborg, Stefan and Lagergren, Jens and Seese, Detlef},
  year = {1991},
  month = jun,
  journal = {Journal of Algorithms},
  volume = {12},
  number = {2},
  pages = {308--340},
  issn = {0196-6774},
  doi = {10.1016/0196-6774(91)90006-K},
  urldate = {2025-01-10},
}

@article{courcelleMonadicSecondorderLogic1990,
  title = {The Monadic Second-Order Logic of Graphs. {{I}}. {{Recognizable}} Sets of Finite Graphs},
  author = {Courcelle, Bruno},
  year = {1990},
  month = mar,
  journal = {Information and Computation},
  volume = {85},
  number = {1},
  pages = {12--75},
  issn = {0890-5401},
  doi = {10.1016/0890-5401(90)90043-H},
  urldate = {2023-09-11},
}

@incollection{reedAlgorithmicAspectsTree2003,
  title = {Algorithmic {{Aspects}} of {{Tree Width}}},
  booktitle = {Recent {{Advances}} in {{Algorithms}} and {{Combinatorics}}},
  author = {Reed, B. A.},
  editor = {Reed, Bruce A. and Sales, Cl{\'a}udia L.},
  year = {2003},
  pages = {85--107},
  publisher = {Springer},
  address = {New York, NY},
  doi = {10.1007/0-387-22444-0_4},
  urldate = {2025-01-10},
  isbn = {978-0-387-22444-2},
  langid = {english}
}

@inproceedings{demaineAlgorithmicGraphMinor2005,
  title = {Algorithmic Graph Minor Theory: {{Decomposition}}, Approximation, and Coloring},
  shorttitle = {Algorithmic Graph Minor Theory},
  booktitle = {46th {{Annual IEEE Symposium}} on {{Foundations}} of {{Computer Science}} ({{FOCS}}'05)},
  author = {Demaine, E.D. and Hajiaghayi, M.T. and Kawarabayashi, K.},
  year = {2005},
  month = oct,
  pages = {637--646},
  issn = {0272-5428},
  doi = {10.1109/SFCS.2005.14},
  keywords = {Application software,Approximation algorithms,Computer science,Graph theory,Heart,Minimization methods,Polynomials,Testing,Tree data structures,Tree graphs},
  file = {/home/felix/Zotero/storage/56ELJJWI/1530755.html}
}

@article{seeseLinearTimeComputable1996,
  title = {Linear Time Computable Problems and First-Order Descriptions},
  author = {Seese, Detlef},
  year = {1996},
  month = dec,
  journal = {Mathematical Structures in Computer Science},
  volume = {6},
  number = {6},
  pages = {505--526},
  publisher = {Cambridge University Press},
  issn = {0960-1295, 1469-8072},
  doi = {10.1017/S0960129500070079},
  urldate = {2023-09-11},
  langid = {english}
}

@InProceedings{drangeComplexityDegenerate2023,
  author =  {Drange, P\r{a}l Gr{\o}n\r{a}s and Greaves, Patrick and Muzi, Irene and Reidl, Felix},
  title = {{Computing Complexity Measures of Degenerate Graphs}},
  booktitle = {18th International Symposium on Parameterized and Exact Computation (IPEC 2023)},
  pages = {14:1--14:21},
  series =  {Leibniz International Proceedings in Informatics (LIPIcs)},
  ISBN =  {978-3-95977-305-8},
  ISSN =  {1868-8969},
  year =  {2023},
  volume =  {285},
  publisher = {Schloss Dagstuhl -- Leibniz-Zentrum f{\"u}r Informatik},
  address = {Dagstuhl, Germany},
  URL =   {https://drops-dev.dagstuhl.de/entities/document/10.4230/LIPIcs.IPEC.2023.14},
  URN =   {urn:nbn:de:0030-drops-194333},
  doi =   {10.4230/LIPIcs.IPEC.2023.14},
  annote =  {Keywords: vc-dimension, datastructure, degeneracy, enumerating}
}

@article{dvorakDomset2013,
  author    = {\Zdenek \Dvorak},
  title     = {Constant-factor approximation of the domination number in sparse graphs},
  journal   = {Eur. J. Comb.},
  volume    = 34,
  number    = 5,
  pages     = {833--840},
  year      = 2013,
  url       = {https://doi.org/10.1016/j.ejc.2012.12.004},
  doi       = {10.1016/j.ejc.2012.12.004},
  timestamp = {Thu, 12 Sep 2019 08:30:27 +0200},
  biburl    = {https://dblp.org/rec/journals/ejc/Dvorak13.bib},
  bibsource = {dblp computer science bibliography, https://dblp.org}
}

@article{dvorakFOBndExp13,
  author       = {Zdenek \Dvorak and
                  Daniel \Kral and
                  Robin Thomas},
  title        = {Testing first-order properties for subclasses of sparse graphs},
  journal      = {J. {ACM}},
  volume       = {60},
  number       = {5},
  pages        = {36:1--36:24},
  year         = {2013},
  url          = {https://doi.org/10.1145/2499483},
  doi          = {10.1145/2499483},
}

@article{siebertzSurvery2025,
  title={On the generalized coloring numbers},
  author={Siebertz, Sebastian},
  journal={arXiv preprint arXiv:2501.08698},
  year={2025}
}

@article{reidlColor2023,
  title={A color-avoiding approach to subgraph counting in bounded expansion classes},
  author={Reidl, Felix and Sullivan, Blair D},
  journal={Algorithmica},
  volume={85},
  number={8},
  pages={2318--2347},
  year={2023},
  publisher={Springer}
}

@inproceedings{penaSeshadhri2025,
  author       = {Daniel Paul{-}Pena and
                  C. Seshadhri},
  editor       = {Yossi Azar and
                  Debmalya Panigrahi},
  title        = {A Dichotomy Hierarchy for Linear Time Subgraph Counting in Bounded
                  Degeneracy Graphs},
  booktitle    = {Proceedings of the 2025 Annual {ACM-SIAM} Symposium on Discrete Algorithms,
                  {SODA} 2025, New Orleans, LA, USA, January 12-15, 2025},
  pages        = {48--87},
  publisher    = {{SIAM}},
  year         = {2025},
  url          = {https://doi.org/10.1137/1.9781611978322.2},
  doi          = {10.1137/1.9781611978322.2},
  timestamp    = {Tue, 28 Jan 2025 14:38:41 +0100},
  biburl       = {https://dblp.org/rec/conf/soda/Paul-Pena025.bib},
  bibsource    = {dblp computer science bibliography, https://dblp.org}
}

\appendix
\section*{Appendix}

\renewcommand\dblfloatpagefraction{.5}
\renewcommand\dbltopfraction{1}
\renewcommand\floatpagefraction{.90}
\renewcommand\topfraction{1}
\renewcommand\bottomfraction{1}
\renewcommand\textfraction{0}

\renewcommand{\thesection}{A}


\subsection*{The Oracle}
\noindent
In this section, we present all the algorithms used by the Oracle, prove their correctness, and analyse their computational complexity.
We start with the algorithm for initialising the Oracle.
Then we proceed to the algorithm that returns a vertex on a call and updates the Oracle data structures.
This algorithm uses a number of routines that are dealt with afterwards.

\begin{algorithm2e}[htp]
	\SetAlgoRefName{\texttt{Initialise\_Oracle}}
	\DontPrintSemicolon
	\KwIn{A graph $G$ and a parameter $p \in [|G|]$}
	Initialise $L \defeq V(G)$,~$R \defeq \emptyset$\;
        $\Candidates \defeq \{v \in V(G) \mid \deg_G(v)\leq p\}$\;
  Initialise~$\Vias = \emptyset$\;
	\For{$v\in V(G)$}{ 
			$\Pack(v) = \emptyset$\;
      \For{$u\in N(v)$}{
		      Add the path $vu$ to the packing $\Pack(v)$\;
      }
  }
	\caption{\label{alg:Initialise}%
	Initialises the oracle.}
\end{algorithm2e}

\begin{algorithm2e}[htp]
	\SetAlgoRefName{\texttt{Collect\_Targets}}
	\DontPrintSemicolon
	$T \defeq \emptyset$\;
	\For{$w \in (\Pack(v) \cap R) \cup \{v\}$}{
			$T \defeq T \cup N_L(w)$\; \label{ln:markA}
			\If{$w \not \in N(v)$}{
				\Continue\;
			}
			\For{$x \in \Pack(w) \cap R$}{
				$T \defeq T \cup N_L(x)$\; \label{ln:markB}
			}
	}

	\Return T\;
	\caption{\label{alg:CollectTargets}%
	Collects the the targets~$\Target^3_L(v)$ for a vertex~$v$}
\end{algorithm2e}

\begin{algorithm2e}[tp]
	\SetAlgoRefName{\texttt{Oracle\_Query}}
	\DontPrintSemicolon
        \If{$\Candidates = \emptyset$}{
            return FALSE\;
        }
        Choose an arbitrary $v \in \Candidates$\;
        $\Candidates \defeq \Candidates \setminus \{v\}$\;
        $T \defeq \ref{alg:CollectTargets}(v)$\;
        $L \defeq L \setminus \{v\}$, $R \defeq R \cup \{v\}$\;
        \tcp{Update vias}
        $\Vias[v] = \emptyset$\;
        \For{$x \in N_R(v)$}{
        	\For{$y \in N_L(v)$}{
        		\If{$|\Vias[x][y]| \leq 2p$}{
        			$\Vias[x][y] \defeq \Vias[x][y] \cup \{v\}$
        		}
        	}
        	\For{$y \in N_L(x)$}{
						\If{$|\Vias[v][y]| == 2p+1$}{
							\Break\;
						}
        		$\Vias[v][y] \defeq \Vias[v][y] \cup \{x\}$
        	}
        }
        \tcp{Update maximal $(2,L)$-admissible packings}
        \ref{alg:Update2Packings}$(v, T)$\; 

        \tcp{Update $(3,L)$-admissible packings}
        \For{$u \in T\setminus\Candidates$}{
            \ref{alg:SimpleUpdate}$(u,v)$\;
            \If{$|\Pack(u)| == p$}{
            	\ref{alg:Stage1Update}$(u)$\;
            }
            \If{$|\Pack(u)| == p$}{
            	\ref{alg:Stage2Update}$(u)$\;
            }
            \If{$|\Pack(u)| == p$}{
                $\Candidates \defeq \Candidates \cup \{u\}$.
            }
        }
        \Return $v$\:
	\caption{\label{alg:Query}%
	Return a vertex and Update Oracle Data Structures.}
\end{algorithm2e}

\begin{algorithm2e}[ht]
	\SetAlgoRefName{\texttt{Update\_2\_Packings}}
	\DontPrintSemicolon
	\KwIn{Vertex~$v$ which moved from~$L$ to~$R$}

	\tcp{Update other packings}
	\For{$u \in N_R(v)$}{
		Remove~$uv$ from~$\Pack(u)$\;
		\If{$\exists y \in N_L(v) \setminus V(\Pack(u))$}{
			Add~$uvy$ to~$\Pack(u)$\;
		}
	}
  \tcp{Add maximal $(2,L)$-admissible packing for~$v$}
  $\Pack(v) \defeq \emptyset$\;
  \For{$y \in T$}{
  	\If{$y \in N(v)$}{
  		Add $vy$ to~$\Pack(v)$\;
  		\Continue\;
  	}
  	\For{$x \in \Vias[v][y]$}{
  		\If{$x \not \in \Pack(v)$}{
  			Add~$vxy$ to $\Pack(v)$\;
  			\Break\;
  		}	
  	}
  }	

	\nonl\hfil\noindent\fbox{%
    \parbox{7cm}{\footnotesize\raggedright
			\textbf{Important:} The packings~$\Pack(u)$, $u \in R$, contain some paths~$uxv$
			where~$v \not \in L$. Finding the affected packings here is too expensive. 
			Instead, we assume that whenever~$\Pack(u)$ is used, these paths are first identified and removed. Since~$|\Pack(u)| \leq p$, this only costs~$O(p)$ time, the same as accessing all vertices
			stored in~$\Pack(u)$.
    }
  }\hfil%
	\caption{\label{alg:Update2Packings}%
	Maintains the maximal two-packings stored for each vertex in~$R$.}
\end{algorithm2e}

\begin{algorithm2e}[ht]
    \SetAlgoRefName{\texttt{Simple\_Update}}
    \DontPrintSemicolon
    \KwIn{A vertex~$u$ whose data needs to be updated
    and the vertex~$v$ which moved from~$L$ to~$R$ and
    $T \defeq \Target^3_L(v)$.}
    Let $uPv \in \Pack(u)$ be the path with endpoints $u,v$\;
    Remove $uPv$ from $\Pack(u)$\;
    Let $w$ be the neighbour of $u$ in $uPv$\;
    \For{$y\in N_L(w)$}{
        \If{$y\not\in V(\Pack(u))$}{
		  Add the path $uwy$ to $\Pack(u)$\;
            \Return\;
        }
    }
    \For{$x\in N_{\Pack(w)}(w)$}{ \label{ln:simple_break}
        \If{$x \in V(\Pack(u))$}{
            \Continue\;
        }
		  \For{$y\in N_L(x)$}{
            \If{$y \in V(\Pack(u))$}{
            	\Continue\;
            }
            \If{$x\in N(u)$}{
                Add the path $uxy$ to $\Pack(u)$\;
                \Break the loop at line~\ref{ln:simple_break} \;
            }\ElseIf{$w \not \in V(\Pack(u))$} {
                Add the path $uwxy$ to $\Pack(u)$\;
                \Break the loop at line~\ref{ln:simple_break} \;
            }
        }
    }
    \If{$|V(uPv)| < 4$}{
      \Return\;
    }
    \For{$y \in N_L(v)$}{ \label{ln:simple_annoying_case}
        \If{$y \in \Pack(u)$}{
          \Continue\;
        }
        \For{$x \in \Vias[v][u]$}{
            \If{$x \in \Pack(u)$}{
                \Continue\;
            }
            Add the path~$uxvy$ to~$\Pack(u)$\;
            \Return\;
        }
    } 
    \caption{\label{alg:SimpleUpdate}%
    Simple update of a packing.}
\end{algorithm2e}

\subsubsection*{Maintaining $\Vias$ and $\Pack(u)$ for~$u \in R$}
\noindent
The $\Vias$ data structure allows us to find for a pair of vertices $y \in L$,
$u \in R$ up to $2p+1$ vertices from~$N(y) \cap N(u) \cap R$, that is,
vertices in~$R$ that connect~$u$ and~$y$. Let us first outline how this data structure can be implemented in theory to avoid the use of a randomized data structure; in our real implementation, we simply use nested hashmaps.

We can implement the first level of~$\Vias$ as a list of pointers, one for each vertex (\eg by assuming the
vertices are normalised to~$[n]$). $\Vias[u]$ for~$u \in L$ then is a list of
key-pointer pairs, where the keys are~$y \in \Target^2_L(u)$. As we show
below, the number of these entries is~$O(p^2)$, thus by sorting the keys
we have an access time of~$O(\log p)$. These keys are all added when~$u$ is
moved from~$L$ to~$R$, no new keys are added. The pointer associated with
each key leads to a list of vertices with up to~$2p+1$ entries; vertices will
be added to this list but never removed.

Let us now prove that~$\Vias$ indeed functions as intended and the running
time cost of maintaining it.

\begin{replemma}{lemma:Vias}
	Assume~$\Vias$ contained the correct information for the partition $L \cup \{v\}, R$ maintained by the Oracle and that~$v \in \Candidates$ was chosen in \ref{alg:Query}. Then after the update of $\Vias$, it again contains the correct information.

	Over the whole run of \ref{alg:Main}, every entry~$\Vias[x]$, $x \in R$,  contains at most~$p^2$ keys and maintaining $\Vias$ costs in total~$O(p^2 m)$ time and takes~$O(p^3 n)$ space.
\end{replemma}
\begin{proof}
	Since the Oracle has not returned FALSE at any previous point, we conclude that~$N_L(x) \leq \pp^3(x) \leq p$ for all~$x \in R \cup \{v\}$.

 	Since the keys stored in~$\Vias[x]$ are in~$\Target^2_L(x)$, by Lemma~\ref{lemma:TargetBound} we never store more than~$p^2$ vertices, so accessing a specific key~$y$ in~$\Vias[x]$ costs at most~$O(\log p)$ as the keys are sorted. Let us now show that \ref{alg:Main} correctly maintains~$\Vias$.

	Consider any pair of vertices~$y \in L$, $x \in R$ such that $v \in N(y) \cap N(x)$. Then~$x \in N_R(v)$ and~$y \in N_L(v)$, therefore~$v$ is added to~$\Vias[x][y]$ unless it already contains~$2p+1$ vertices. This operation costs~$O(\log p + p) = O(p)$ and is performed for~$N_L(v) \cdot N_R(v) \leq p \deg(v)$ pairs.

	Consider now any pair of vertices~$y \in L$, $x \in R$ such that~$x \in N(y) \cap N(v)$. Then~$x \in N_R(v)$ and~$y \in N_L(x)$, therefore~$x$ is added to~$\Vias[v][y]$, unless it already contains~$2p+1$ vertices.
	This operation costs~$O(p)$ and is performed at most~$2p+1$ times.

	The total cost of all of these operations is
	\[
		\sum_{v \in V(G)} O(p^2 \deg(v) + p(2p+1)) = O(p^2m + p^2 n) = O(p^2m).
	\]
	For each vertex, we store at most~$p^2$ keys, each with a list of up to~$2p+1$ entries, so the maximum space used is~$O(p^3 n)$.
    \qed
\end{proof}

\begin{replemma}{lemma:2Packings}
	Consider a run of~\ref{alg:Query} that returns~$v \in V(G)$.
	After calling \ref{alg:Update2Packings}, for all~$u \in R$,
	$\Pack(u)$ contains a chordless, maximal $(2,L)$-admissible packing rooted at~$u$.
	Maintenance of these packings costs a total of~$O(n p^4 \log p)$ time and $O(np)$ space over the whole run of the algorithm.
\end{replemma}

\begin{proof}
	Let us first prove the statement for~$u \in R \setminus \{v\}$ and deal with~$u = v$ afterwards.
	The first claim is easy to verify: if~$\Pack(u)$ was a chordless, maximal $(2,L \cup \{v\})$-admissible packing, then the only reason its size would drop is it contains a path~$uv$ or~$uxv$. Note that in the latter case, since~$\Pack(u)$ is chordless, $v \not \in N(u)$. But then no $(2,L)$-admissible path from~$u$ can contain~$v$; therefore $uxv$ can simply be removed from~$\Pack(u)$, maintaining maximality (note that we defer this removal to the next time~$\Pack(u)$ is accessed since locating~$u$ from~$v$ is too costly).  

	Consider therefore the case that~$uv \in \Pack(u)$. If~$\Pack(u)$ is not maximal after removing~$uv$, that means that there exists~$(2,L)$-admissible path~$uvy$ with~$y \in L$. Note that, by maximality, $y \not \in N(u)$ as otherwise $uy$ would already be in~$\Pack(u)$. If~$uvy$ exists,
	$u \in N_R(v)$ and~$y \in N_L(v)$ and \ref{alg:Update2Packings} adds this path to~$\Pack(u)$. Since~$uwy$ is chordless, so is the resulting packing.

	Let us now analyse~$\Pack(v)$. \ref{alg:Update2Packings} constructs this packing by checking for each vertex~$y \in T$ whether it can be connected to~$v$ by an~$(1,L)$- or $(2,L)$-admissible path. As the first possibility is checked first, the resulting paths are clearly chordless and since we check all vertices in~$\Target^2_L(v) \subseteq T$, the packing is clearly maximal.

	To bound the running time, note that we iterate over~$u \in N_R(v)$ to modify a packing of size~$O(p)$ and search through $N_L(v) \setminus N_L(u)$. With the usual set data structures, this takes time~$O(p \log p)$ since~$|N_L(v)|, |N_L(u)| \leq p$ for a total of~$O(\deg(v) p \log p)$. Summing over all vertices~$v \in G$, this takes a total time of~$O(m p \log p)$. 
	For the construction of~$\Pack(v)$, we iterate over~$|T| \leq O(p^3)$
	vertices and query up to~$O(p)$ vias, where the query costs~$O(\log p + p)$
	(since we only need to locate the entries for~$\Vias[v][y]$ once to iterate over all~$\leq 2p+1$ entries). Testing whether a vertex is contained in $\Pack(v)$ costs~$O(\log p)$, thus the whole construction takes at most
	$O(p^4 \log p)$ time. Overall, maintaining $(2,L)$-admissible packings
	costs therefore~$O(m p\log p + n p^4 \log p ) = O(n p^4 \log p)$ over the whole run of the algorithm.
	As all these packings contain at most~$p$ paths, the space bound~$O(np)$ follows.
    \qed
\end{proof}

\subsubsection*{Maintaining $\Pack(u)$ for~$u \in L$}

\begin{replemma}{lemma:PackingProperties}
	Let~$L,R$ be a partition of~$V(G)$ and let $H$ be a $(3,L)$-admissible packing rooted at~$u \in L$ which is covering and chordless.

	Let~$y \in \Target^3_L(u) \setminus \Target^1_L(u)$. Then there exists either a path~$uwy$ or~$uwxy$, $w,x \in R$, where~$w \in H$.
\end{replemma}
\begin{proof}
	Let~$uPy$ be a $(3,L)$-admissible path that intersects~$H$ and assume~$uPy$ is the shortest among all paths with this property. Consider first the
	case that~$uPy = uxy$, \eg it has length two. Since it intersects~$H$ in a vertex other than~$u$ and~$y$, it must be~$x$. 

	Now consider the case
	that~$uPy = uwxy$, \eg it has length three. If~$w \in H$, we are done. Otherwise, it must be the case that~$x \in H$. If~$x \in N(u)$ we arrive at a 
	contradiction since~$uxy$ is a shorter $(3,L)$-admissible path that intersects~$H$. Therefore~$\dist_H(u,x)$ must be two, let~$w'$ be the parent
	of~$u$ in~$H$. Then~$uw'xy$ is a path with~$w',x \in R$ and~$w' \in H$, as claimed.
    \qed
\end{proof}

\begin{replemma}{lemma:CollectTargets}
	If $\Pack(u)$ satisfies the two invariants (covering and chordless),
	then calling \ref{alg:CollectTargets} returns the set~$\Target^3_L(u)$
	in time~$O(p^2 |\Pack(u)|)$.
\end{replemma}
\begin{proof}
	Let~$H = \Pack(u)$. Since~$H$ is covering, all vertices $\Target^1_L(u)$ are contained in~$H$ and therefore added in line~\ref{ln:markA}
	when the loop iterates over~$w = v$.

	By \ref{lemma:PackingProperties}, for every~$y \in \Target^3_L(u) \setminus \Target^1_L(u)$ there exists either a path~$uwy$ or~$uwxy$, $w,x \in R$, where~$w \in V(H)$. In the first case, $y \in N_L(w)$ and it is added to~$T$ in line~\ref{ln:markA}.
	In the second case, since~$\Pack(w)$ is a maximal $(2,L)$-admissible packing, there must be a~$x \in \Pack(w) \cap R$ with~$y \in N_L(x)$. Hence, the vertex~$y$ is added to~$T$ in line~\ref{ln:markB}.

	To bound the running time, the outer loop iterates over~$\Pack(u)$, therefore does at most~$O(|\Pack(u)|)$ iterations. The inner loop iterates over~$\Pack(w)$, $w \in R$, which contains at most~$p$ paths. Therefore we access~$N_L$ for
	at most~$O(p|\Pack(u)|)$ vertices that are all in~$R$ and hence have at most~$p$ neighbours in~$L$. We arrive at a running time of~$O(p^2|\Pack(u)|)$.
    \qed
\end{proof}

\begin{replemma}{lemma:SimpleUpdate}
	If $\Pack(u)$ satisfies the two invariants (covering and chordless) before~$v$ was moved from~$L$ to~$R$ and \ref{alg:SimpleUpdate}
	with parameters~$u$ and~$v \neq u$, then~$\Pack(u)$ satisfies the invariants 
	after the call to \ref{alg:SimpleUpdate} with~$v$ now in~$R$.

	This update costs~$O(p^3 \log p)$ time.
\end{replemma}
\begin{proof}
	Let us call the packing~$\Pack(u)$ before the call~$H_1$. Let~$uPv$ be the path from~$u$ to~$v$ in~$H_1$, where~$P$ contains between zero and two vertices. Let initially~$H_2$ be the packing~$H_1$
	with~$uPv$ removed, as constructed by \ref{alg:SimpleUpdate} in the first steps, we will now argue how $H_2$ is modified by the algorithm and argue that in all cases the resulting packing satisfies both invariants. We consider~$L, R$ after the move of~$v$, so $v \not \in L$.

	If~$H_2$ as just constructed happens to be a $(3,L)$-covering packing then \ref{alg:SimpleUpdate} returns~$H_2$ since none of the branches that add a path to packing will execute and the invariants clearly hold.

	Otherwise, let~$y \in \Target^3_L(u)$ so that no $3$-admissible path
	intersects~$H_2$. Note for all~$y \in \Target^1_L(u)$, we have that
	$y \neq v$ and, since~$H_1$ is covering, that the path $uy$ is part of
	$H_1$. Thus assume that $y  \not \in \Target^1_L(u)$, which means that we
	can apply Lemma~\ref{lemma:PackingProperties} to obtain a path~$uwzy$
	or~$uwy$ with~$w \in H_1$. This path must intersect~$uPv$, let us first deal with the case that the intersection is~$v$ and~$uPv$ that contains four vertices, which means that~$v \not \in N(u)$ (since~$H_1$ is chordless) and therefore that the uncovered path must have the
  form~$uwvy$ for some~$w \not \in H_2$ as otherwise it would be already covered. This type of path is added in the loop starting
  at \label{ln:simple_annoying_case}, note that if this path exists
  then a suitable vertex~$w' \in \Vias[v][u]$, $w' \not \in H_1$ will be found since we store up to~$2p+1$ vias for each pair. 

  Assume now that the uncovered path~$uwzy$ ($uwy$) does not intersect~$uPv$ in this way. Then~$w \in P$ in both cases since either
	path must intersect~$uPv$, and since~$uPv$ is chordless we know that $w$ is the neighbour of~$u$ in~$Pv$ ($w = v$ is possible if~$P$ is empty).

	Therefore the vertex~$y$ is either in~$\Target^1_L(w)$ or in~$\Target^2_L(w)$. In the first case, \ref{alg:SimpleUpdate} discovers~$y$ in the first loop and, since we assume that~$y \not \in H_2$, adds~$uwy$ to~$H_2$. Since the loop breaks at this point, we need to argue that both invariants hold. First, we already noted that~$y \not \in \Target^1_L(u)$, thus $y \not \in N(u)$. As~$uwy$ is the only path we added to~$H_2$ the resulting packing has the chordless property. Clearly the covering property holds for~$y$, so consider any other vertex~$y' \in \Target^3_L(u)\setminus \Target^1_L(u)$
	with~$y' \not \in H_2$. By applying~\ref{lemma:PackingProperties} as above, we again find that there must be a $(3,L)$-admissible path~$uwx'y'$ or~$uwy'$. Since~$w \in H_2$, either path would be covered and we conclude that~$H_2$ is indeed covering.

	If the first loop executes without adding any path to~$H_2$, note that every vertex~$y' \in \Target^1_L(w)$ is already contained in~$H_2$, a fact that we will use below.

	Consider now the case that~$y \in \Target^2_L(w)$ and assume towards a contradiction that $y$ is \emph{not} covered by~$H_2$ by the end the algorithm, where now~$H_2$ is the packing constructed by the algorithm.  Since~$\Pack(w)$ is a maximal $(2,L)$-admissible packing, there must exist a vertex~$x \in N_{\Pack(w)}(w)$ such that $y \in \Target^1(x)$. If~$x \in H_2$
	at this point, note that the $(3,L)$-admissible path~$uxy$ is covered by~$H_2$, contradicting our assumption. Thus at this iteration of the loop,
	the algorithm must have found~$y$ in the inner loop. Since~$y \not \in H_2$, the inner loop must have reached the branching statements, and the only reason why no path with leaf was added to~$H_2$, was that~$w \in H_2$. But
	then the $(3,L)$-admissible path~$uwzy$ is clearly covered, contradiction.
	We conclude that after the the second loop is finished, $H_2$ is indeed a covering packing.

	It now only remains to show that~$H_2$ is chordless. Simply note that if the
	second loop adds a path of length two ($uxy$) then there cannot be an edge between~$u$ and~$y$ as otherwise, as observed above, $uy$ would already be a path in~$H_2$. If the second loop adds a path of length three ($uwxy$) then the if-statement ensures that~$x \not \in N(u)$ and by the previous observation also~$y \not \in N(u)$. In both cases the paths are chordless, and hence~$H_2$ is chordless. 

	To bound the running time, note that the most expensive part is the second loop, where the outer loop iterates over $|N_{\Pack(w)}(w)| \leq |\Pack(w)| \leq p^2$ vertices and the inner loop over~$|N_L(x)| \leq p$ vertices, for a total of~$O(p^3 \log p)$ time where the log-factor comes from maintaining suitable set-data structures in~$\Pack(u)$.
    \qed
\end{proof}

\subsubsection*{Stage-1-Update}

\begin{algorithm2e}[htp]
    \SetAlgoRefName{\texttt{Stage\_1\_Update}}
    \DontPrintSemicolon
    \KwIn{A vertex $u$}
		$T_{2,3} \defeq \ref{alg:CollectTargets}(u) \setminus N_L(u)$\;
    \For{$w\in N_R(u) \setminus V(\Pack(u))$}{
    		\For{$y \in T_{2,3} \setminus V(\Pack(u))$}{
    			\If{$y \in N_L(w)$}{
    				Add the path~$uwy$ to $\Pack(u)$\; \label{ln:Stg1_markA}
    				\Return\;
    			}
    			\For{$x \in \Vias[w][y]$}{
    				\If{$x \in \Pack(u)$}{
    					\Continue;
    				}
            \If{$x\in N(u)$}{
	          		Add the path $uxy$ to $\Pack(u)$\; \label{ln:Stg1_markB}
	           } \Else{
								Add the path $uwxy$ to $\Pack(u)$\; \label{ln:Stg1_markC}
	           }
						\Return\;    				
    			}
    		}
    }
    \caption{\label{alg:Stage1Update}%
    Tries to add a disjoint path to the packing~$\Pack(u)$.}
\end{algorithm2e}

\begin{replemma}{lemma:Stage1Update}
	If $\Pack(u)$ satisfies the two invariants (covering and chordless) before~$v$ was moved from~$L$ to~$R$ and \ref{alg:Stage1Update}
	was called for~$u$, then~$\Pack(u)$ satisfies the invariants after the call to \ref{alg:Stage1Update} with~$v$ now in~$R$.

	The cost of this update is $O(\deg_G(u) \cdot p^4)$.
\end{replemma}
\begin{proof}
	Let us call the packing~$\Pack(u)$ before the call~$H$. Note that~$|H| = p$ as otherwise this method would not be called. Moreover, by Lemma~\ref{lemma:SimpleUpdate}, the packing~$H$ at this point satisfies both invariants. We consider~$L, R$ after the move of~$v$, so $v \not \in L$. Recall that by Lemma~\ref{lemma:CollectTargets}
	we are assured that $T_{2,3} = \Target^3_L(u) \setminus \Target^1_L(u)$.

	Note that \ref{alg:Stage1Update} either adds no path to~$H$ or a single path. If no path is added, the packing of course still satisfies both invariants. Further, if a path is added, the covering property is maintained. 

	Let us first argue that a path will be found if it exists. Assume
	there exists a path~$uwy$ or~$uwxy$ where~$y \in \Target^3_L(u) \setminus \Target^1_L(u)$, $w \in N_R(u)$ and~$w,x \not \in V(H)$.
	Clearly the vertex~$w$ will be found in the out loop and~$y$ in the second, and if the path is~$uwy$ then it will be added in Line~\ref{ln:Stg1_markA}). If it is $uwxy$ the the innermost loop will find a suitable~$x' \in \Vias[w][y]$ with~$x' \not \in \Pack(u)$ since it will either locate~$x$, or $\Vias[w][y]$ contains~$2p+1$ vertices.
	Since~$|H| = p$ there are at most~$2p$ vertices in~$V(H)$ that could be contained in~$\Vias[w][y]$ (those at distance one or two from $v$in~$H$) and by the pigeon hole principle a suitable $x' \in \Vias[w][y] \setminus V(H)$ exists. We conclude that a disjoint path~$uwx'y$ will be found.

	For the chordless property, simply note then that if a path~$uxy$ (Line~\ref{ln:Stg1_markA}) or~$uwy$ (Line~\ref{ln:Stg1_markB}) is added to~$H$, then $y \not \in \Target^1_L(u)$ and thus~$y \not \in N(u)$ as otherwise $uy$ would already be a path in~$H$. If a path~$uwxy$ is added (Line~\ref{ln:Stg1_markC}), then by the if-statement we have that~$x \not \in N(u)$ and again $y \not \in N(u)$. We conclude that the added path is chordless in either case, and hence the resulting packing is as well.

	Let us now bound the running time, here we will use that~$|\Pack(u)| = p$
	when this update is performed. The call to \ref{alg:CollectTargets}
	costs~$O(p^2 |\Pack(u)|) = O(p^3)$, the returned set~$T$ has also size at 
	most~$O(p^3)$. The outer loop has at most~$\deg_G(u)$ iterations,
	the inner loop~$|T|$ many. Accessing the correct vias entry then costs~$O(\log p)$ to then iterate over at most~$O(p)$ entries. We can neglect the cost of adding a path to the packing, as it only happens once. Hence the total running time	is~$O(\deg_G(u) p^3 (p + \log p))$
	which is subsumed by the claimed time.
    \qed
\end{proof}

\subsubsection*{Stage-2-Update}

\begin{algorithm2e}[t]
    \SetAlgoRefName{\texttt{Stage\_2\_Update}}
    \DontPrintSemicolon
    \KwIn{Vertex $u$}
		\tcp{Step 1}
		Let~$\hat \Pi$ be the orientation of~$H \defeq \Pack(u)$ with arcs 	pointing away from~$u$\;
		$T \defeq \ref{alg:CollectTargets}(u) \setminus N_L(u)$\;		
		$\hat S_1 \defeq N_H(u) \setminus L$, 
		$\hat S_2 \defeq V(H) \setminus (S_1 \cup L)$\;
		$\hat T_{2,3} \defeq V(H) \cap T$\;

		Add~$E_G(\hat S_1, \hat S_2 \cup \hat T_{2,3})$ as arcs to~$\hat \Pi$\;
		Add~$E_G(\hat S_2, \hat T_{2,3})$ as arcs to~$\hat \Pi$\;

		\tcp{Step 2} 
		\For{$y \in \hat S_2 \cup \hat T_{2,3}$}{
			\For{$x \in N_R(u)$}{
				\If{$xy \in E(G)$}{
					Add~$x$, $ux$, and~$xy$ to~$\hat \Pi$\;
					\Break\;
				}
			}
		}

		\tcp{Step 3} 
		\For{$y \in \hat T_{2,3}$}{
			\For{$x \in \hat S_1$}{
				\For{$w \in \Vias[x][y]$}{
					\If{$w \not \in V(H)$}{
						Add~$w$, $xw$, and~$wy$ to~$\hat \Pi$\;
					}
				}
			}
		}
 
		\tcp{Step 4} 
		\For{$y \in \hat T_{2,3}$}{ \label{line:jump1}
			\For{$w \in N_R(u)$}{ 
				\For{$x \in \Vias[w][y]$}{
					\If{$x \not \in V(H)$}{
						Add~$w,x$ and~$uw$, $wx$, $xy$ to~$\hat \Pi$\;
						\Continue with Line~\ref{line:jump1};
					}
				}
			}
		}

		\tcp{Step 5 and 6}
		\For{$w \in \hat S_1 \cup \hat S_2$}{ \label{line:jump2}
			\For{$y \in N_L(w)$}{
				\If{$y \not \in V(H)$}{
					Add~$y$ and~$wy$ to~$\hat \Pi$\;
					\Continue with Line~\ref{line:jump2}\;
				}
			}
			\If{$w \in \hat S_2$}{
				\Continue\;
			}
			\For{$y \in T \setminus \hat T$}{
				\For{$x \in \Vias[w][y]$}{
					\If{$x \not \in V(H)$}{
						Add~$x$, $wx$, $xy$ to~$\hat \Pi$\;
						\Continue with Line~\ref{line:jump2}\;	
					}
				}
			}
		}

		\tcp{Find augmenting path}
		\label{line:PiDone}
		\If{$\exists$ augmenting path~$P$ for~$H$ in~$\hat \Pi$}{ \label{ln:Stg2_markE}
			$\Pack(v) \defeq H \operatorname{\Delta} P$\;
		}
    \caption{\label{alg:Stage2Update}%
    Tries to increase the size of~$\Pack(u)$ using a flow network.}
\end{algorithm2e}

\begin{replemma}{lemma:MaximalPackingTargets}
	Let~$p$ be an integer and let~$L,R$ be a partition of~$V(G)$ such that~$\pp^3_L(v) \leq p$. Let~$u \in L$ and let~$H$ be a chordless, maximal $(3,L)$-admissible packing rooted at~$u$.
	Then~$|\Target^3_L(v)| \leq |H| (p-1)^2$.
\end{replemma}
\begin{proof}
	Since~$H$ is maximal, note that~$\Target^1_L(u) \subseteq V(H)$.
	It is easy to see that a maximal packing is also covering, thus by Lemma~\ref{lemma:PackingProperties}, for every~$y\in \Target^3_L(u) \setminus \Target^1_L(u)$ there exists either a path~$uwy$ or~$uwxy$ with~$w,x \in R$ and~$w \in H$. We construct a tree~$\Gamma$ by starting with~$\Gamma = H$, and then for each vertex~$y\in \Target^3_L(u) \setminus \Target^1_L(u)$ with~$y \not \in \Gamma$ we add either the path~$uwy$ or~$uwxy$ to~$\Gamma$. At the end of this process, the leaves of~$\Gamma$ are exactly~$\Target^3_L(u)$.

	Note that~$N_\Gamma(u) = N_H(u)$ the first vertex~$w$ on each added path is already in~$V(H)$. Accordingly, $u$ has exactly~$|H|$ children. Each interior vertex~$x \in \Gamma$ lives in~$R$ and note that we can construct a $(2,L)$-admissible packing rooted at~$x$ by routing one path into each subtree of~$x$ and one path to the root~$u$. Since~$\pp^2_L(x) \leq \pp^3_L(x) \leq p$, we conclude that each interior vertex of~$\Gamma$ has at most~$p-1$ children. Thus, $\Gamma$ has at most~$|H| (p-1)^2$ leaves, proving the claim.
    \qed
\end{proof}

\marginnote{Vertex capacity reduction}
\noindent
Recall that in packing flow network, all vertices except the root have unit capacity and all arcs have unit capacities as well. Networks with vertex capacities can be reduced to networks with only edge capacities, to that end each vertex~$v$ is split into two vertices~$v^-$ and~$v^+$ with the arc~$v^-v^+$, with capacity equal to the vertex capacity, between them. Then all arcs~$uv$ from the original network are changed to~$u^+v^-$.

\marginnote{Augmenting path}
We want to avoid this construction in the theoretical analysis, therefore we need the following variation of augmenting paths:
\begin{definition}[Augmenting path]\label{def:AugmentingPath}
	An \emph{augmenting path} in a packing flow network $\Pi$
	with flow~$f$ is a path~$P$ from the source of~$\Pi$ to one of the sinks of~$\Pi$ with the additional property that if~$P$
	enters a saturated vertex via an unsaturated arc, it must leave
	via a saturated arc.
\end{definition}
Paths of this type in a packing flow network are equivalent to augmenting paths in the network derived via the above vertex splitting reduction.

\begin{replemma}{lemma:AugmentingPath}
	Let~$u \in L$ and let~$H$ be a chordless, maximal $(3,L)$-admissible packing rooted	at~$u$ and let~$f_H$ be the corresponding flow in the packing flow
	network~$\Pi$ for~$u$ as per Lemma~\ref{lemma:FlowPacking} with sets
	$S_1,S_2,T_0,T_1,T_2,T_3$. If~$H$ is not
	maximum, then there exists an augmenting path~$uPy$ for~$f_H$ from~$y \in T \defeq T_2 \cup T_3$ to~$u$ with the following properties:
	\begin{enumerate}
		\item $V(P) \cap T$ is a subset of~$V(H) \cap T$,
		\item At most one vertex in $P \cap N_R(u)$ is not contained in~$H$ and 
				  if it exists it is the first vertex in~$P$.
		\item The beginning of~$uPy$ has either the shape~$uwz$ 
					with~$w \in S_1 \setminus V(H)$ and~$z \in (S_2 \cup T_2) \cap V(H)$, or it has
					the shape~$uwxz$ with~$w \in S_1 \setminus V(H)$, $x \in S_2 \setminus V(H)$, and~$z \in V(H) \cap T_3$.
		\item We let~$uPy = uP_1wP_2y$, where~$w$ is the first vertex of~$H$ on~$P$.	
					Then each vertex~$z \in P_2$ with~$z \not \in V(H)$ must be in~$S_2$
					and appears in a subpath~$azb$ in~$wP_2y$ where~$a \in S_1 \cap V(H)$
					and~$b \in (T_2 \cup T_3) \cap V(H)$ or~$b = y$.
	\end{enumerate}
\end{replemma}
\begin{proof}
	By the correspondence established between flows and admissible packings in
	Lemma~\ref{lemma:FlowPacking}, if $H$ is not a maximum then there must
	exist a flow larger than~$|f|$ in~$\Pi$. Since the network contains vertex
	capacities, this means that there must exist an augmenting \emph{walk} in
	the flow network which can visit every vertex with a finite vertex
	capacity up to two times---this can be easily seen by performing the usual
	reduction for vertex capacity networks by replacing each vertex~$v$ with
	two vertices~$v^-$, $v^+$ with an arc~$v^-v^+$ with the vertex capacity on
	it (in our case$1$) and letting all in-arcs of~$v$ go to~$v^-$ and all
	out-arcs go from~$v^+$. By the same construction we can see that the walk might visit multiple vertices in~$T_1 \cup T_2 \cup T_3$, albeit only once each.

	If we compute the \emph{shortest} such walk, however, it is clear that that this is indeed a path: if a vertex~$v \in S_1 \cup S_2$ with capacity~$1$ is visited more than once, then we can shortcut the walk by removing a loop. The resulting path is clearly still an augmenting path, so let us from now on consider a shortest augmenting path~$uPy$. Note that in particular~$u \not \in P$.

	Regarding the possible locations of~$y$, note that since~$H$ is a maximal path packing,	all vertices in~$N_L(v)$ must be contained in~$H$. Thus all arcs from~$u$ to~$T_1$ are saturated by~$f_H$, and therefore no augmenting path can end in~$T_1$, which leaves~$y \in T_2 \cup T_3$.

	\emph{Property 1}:
	To see that $V(P) \cap T$ is a subset of~$V(H) \cap T$, note that if~$P$ contains~$z \in T$, then since vertices in~$T$ have no out-arcs in~$\Pi$,  the two arcs that~$vPy$ uses will be in-arcs, therefore~$f_H$ has one unit of flow going through exactly one of these two arcs, meaning that~$z \in H$.

	\emph{Property 2}:
	We now prove that at most one vertex in $P \cap N_R(u)$ is not contained in~$H$. Let~$z \in P$ such that~$z \in N_R(u)$ and~$z \not \in H$ and let
	$uPy = uP_1zP_2y$. Now simply note that already~$uzP_2y$ is an augmenting path since the arc~$uz$ is, by assumption, not in~$H$ and hence carries no flow in~$f_H$. Thus if~$uPy$ is a shortest augmenting path, the only such vertex~$z$ must come right after~$u$ on the path.

	\emph{Property 3}:
	The first vertex~$w \in P$ is necessarily outside of~$V(H)$, since it must be
	reached via a non-saturated edge. The successor~$x$ of~$w$ on~$P$ is then either in~$T_2 \cup S_2 \cap V(H)$ which gives us~$uwx$ as the start, or we have~$x \in S_2 \setminus V(H)$ in which case the vertex~$z$ after~$x$ must be in~$T_3$. By the maximality assumption on~$H$, $z$ must be in~$H$ as otherwise~$uwxz$ could be added to~$H$. We conclude that~$z \in V(H) \cap T_3$, as claimed.

	\emph{Property 4}:
	Let let~$uPy = uP_1wP_2y$ with~$z_0$ as the first vertex of~$H$ on~$P$. Consider towards a contradiction that there are two successive vertices~$z_1, z_2$
	in~$P_2$ both of which are not in~$V(H) \cup \{y\}$. Let the relevant subsequences of the path be~$az_1z_2b$ where~$a = w$ and~$b = y$ is possible, but~$a = u$ is not by our choice of~$P_2$.
	
	Note that none of the arcs~$az_1$, $z_1z_2$, $z_2b$ are in~$E(H)$ and
	therefore carry no flow in~$f_H$. But then the flow~$f_H$ augmented
	by~$uPy$ would contain the path~$az_1z_2b$, which by Lemma~\ref
	{lemma:FlowPacking} means that~$az_1z_2b$ is a path in the resulting $
	(3,L)$-packing. This is only possible if~$a = u$, which we know is impossible.

	For a triple~$azb$ in~$wP_2y$, note that the arcs~$az$ and~$zb$ carry no flow
	and therefore lead us further away from~$u$. Therefore~$a \in S_2$ is impossible,
	as~$b$ would already have distance four from~$u$. $a = u$ is also impossible since~$u$ is the start of the path, hence it only remains that~$a \in S_1 \cap V(H)$,
	which implies that~$x \in S_2 \setminus V(H)$.
	Since we established above that~$b \in H$, it follows that~$b \in T_3 \cap V(H)$.
    \qed
\end{proof}

\begin{replemma}{lemma:SubflowNetwork}
	Let~$p > 1$ be an integer such that~$\pp^3_L(x) \leq p$ for all~$x \in R$.
	Let~$u \in L$ and let~$H$ be a chordless, maximal $(3,L)$-admissible packing rooted at~$u$ of size~$p$. Let~$\Pi$ be the flow packing network for~$u$ and let~$f_H$ be the flow corresponding to~$H$. 

	There exists a subnetwork~$\hat \Pi$ of~$\Pi$ with~$V(H) \subseteq V(\hat \Pi)$ with at most~$O(p^2)$ vertices and edges, with the property that $f_H$ can be increased	in~$\Pi$ if and only if~$f_H$ can be increased in~$\hat \Pi$.
\end{replemma}
\begin{proof}
	Let~$\hat \Pi$ be be a flow network constructed as follows:
	\begin{enumerate}
		\item Start out with~$\hat \Pi$ as the subnetwork of~$\Pi$ induced by~$V(H)$. 
		\label{step:add_H}
		\item For each~$y \in (S_2 \cup T_2) \cap V(H)$, if in~$\Pi$ there exists a path~$uxy$ with~$x \in S_1 \setminus V(H)$, add~$x$ to~$\hat \Pi$ as well as the arcs~$ux$ and~$xy$. 
		\label{step:add_u_to_level2}
		\item For each~$y \in (T_2 \cup T_3) \cap V(H)$ and~$x \in S_1 \cap V(H)$, if in~$\Pi$ there exists a path~$xwy$ with~$w \in S_2 \setminus V(H)$, add~$w$ to~$\hat \Pi$ as well as the arcs~$xw$ and~$wy$. 
		\label{step:add_level1_to_level23}		
		\item For each~$y \in (T_2 \cup T_3) \cap V(H)$, if in~$\Pi$ there exists a path~$uwxy$ with~$w \in S_1 \setminus V(H)$, $x \in S_2 \setminus V(H)$, then add
		$w$ and~$x$ to~$\hat \Pi$ as well as the arcs~$uw$,	$wx$, and~$xy$.
		\label{step:add_u_to_level3}
		\item For every $w \in (S_1 \cup S_2) \cap V(H)$, if there exists~$y \in (T_2 \cup T_3) \setminus V(H)$ such that~$wy \in \Pi$, then add~$y$ and the arc~$wy$ to~$\hat \Pi$. We add at most one such vertex~$y$ and arc~$wy$ for each~$w$.
		\label{step:add_direct_Ts}
		\item For every $w \in S_1 \cap V(H)$ for which the previous step has 
		\emph{not} added a neighbour in~$(T_2 \cup T_3)\setminus V(H)$, if there exists	a path~$wxy$ in~$\Pi$ with~$x \in S_2 \setminus V(H)$ and~$y \in (T_2 \cup T_3)\setminus V(H)$, add~$x$ to~$\hat \Pi$ as well as the arcs~$wx$ and~$xy$.
		\label{step:add_indirect_Ts}
	\end{enumerate}
	The capacities of the arcs and vertices are exactly as in~$\Pi$, the source is~$u$ and the sinks are~$V(\hat \Pi) \cap (T_2 \cup T_3)$. 

	Since~$V(H) \subseteq V(\hat \Pi)$, the flow~$f_H$ is well-defined on~$\hat \Pi$ if we ignore the flow into~$T_1$ (since $H$ is maximal, all arcs from~$v$ to~$T_1$ are saturated and no augmenting path will change that, which is why we can ignore this part of the packing/flow).

	Recall that an augmenting path in our definition 
	(Definition~\ref	{def:AugmentingPath}) has the additional restriction that if it enters a saturated vertex via an unsaturated arc, it must exit via a saturated arc. This means that when we take an augmenting path~$PaQbR$ and replace the subpath~$Q$ by a subpath~$Q'$, then if the boundaries vertices~$a,b$ are not saturated and~$P$, $Q'$, and~$R$ are disjoint, then the resulting path will also be augmenting.

	Let now~$uPy$ be an augmenting path for~$f_H$ on~$\Pi$ with the properties promised by Lemma~\ref{lemma:AugmentingPath} and assume that it is the shortest such path between~$u$ and~$T_2 \cup T_3$ in~$\Pi$. Let further~$H'$ be the packing	resulting from augmenting~$f_H$ by~$uPy$ (as per Lemma~\ref{lemma:FlowPacking}, the augmented flow will correspond to a packing). We argue argue that then an augmenting path~$uP'y$ for~$f_H$ on~$\hat \Pi$ exists by constructing it from~$uPy$. We will keep the vertices~$uP \cap V(H)$ and argue that all other vertices can be replaced by suitable alternatives.

	First, consider the case that~$y \not \in \hat \Pi$ and let~$z \in uP$ be the last vertex in~$V(H)$. By Property~4 of
	Lemma~\ref{lemma:AugmentingPath} we then either have that~$uPy$ ends in
	$sy$ or in~$sxy$ with~$x \not \in V(H)$. Since~$sx$ and~$xy$ carry no flow
	in~$f_H$, a path in~$H'$ must end in~$sxy$, but then this path can
	only be~$usxy$, hence~$s \in S_1 \cap V(H)$ and~$y \in T_3$. By construction steps~\ref{step:add_direct_Ts} and~\ref{step:add_indirect_Ts}, therefore~$\hat \Pi$ either contains 
	an arc~$sy'$ with~$y' \in T_2 \setminus V(H)$ or a path~$sx'y'$ with~$x' \in S_2 \setminus V(H)$ and~$y' \in T_3 \setminus V(H)$. In either case,
	we can replace the end of the path~$uPy = uP'sQy$ to obtain~$uP's\hat Q\hat y$, where~$s\hat Q \hat y$ is a path in~$\hat \Pi$, and we claim that this
	is indeed an augmenting path. First, assume that there exists some vertex
	$x \in P' \cap \hat Q$ which makes this not a path.	Since~$x \in \hat Q$, it follows that~$x \not \in V(H)$. But	then the path obtain by going from~$u$ to~$x$ via~$P'$ and then directly to~$\hat y$ via~$\hat Q$ is shorter than~$uPy$, contradicting our assumption of~$uPy$ being a shortest augmenting path. Thus~$uP's\hat Q\hat y$ is indeed a path and since all
	vertices we replaced carry no flow in~$f_H$, it is easy to see that this is still an augmenting path.

	Let us now take care of the beginning of the path. By Property~3 of
	Lemma~\ref{lemma:AugmentingPath}, the path~$uP's\hat Q\hat y$ has the
	shape~$uRzP''s\hat Q \hat y$, where~$R$ contains either one or two vertices
	not contained in~$V(H)$ and~$z \in V(H)$. A path~$u\hat Rz$ 
	with~$\hat R \cap V(H) = \emptyset$	was then added to~$\hat \Pi$
	either in step~\ref{step:add_u_to_level2} or~\ref{step:add_u_to_level3} of the construction, we claim that~$u\hat RzP''s\hat Q \hat y$ is still an augmenting path. By the same argument as above, if we had a joint vertex in~$\hat R \cap P''$ or~$\hat R \cap \hat Q$, this vertex would be outside of~$V(H)$ and therefore carries no flow in~$f_H$, therefore we construct a shorter augmenting path contradicting our assumption about~$uPy$ being shortest.

	Finally, consider the middle part~$P''$ of~$u\hat RzP''s\hat Q\hat y$. Consider a vertex~$z' \in P'' \setminus V(H)$ with~$z' \not \in \hat \Pi$.
	By Property~4 of Lemma~\ref{lemma:AugmentingPath}, $z'$ appears in a subpath
	$az'b$ in~$P''$ with~$a \in S_1 \cap V(H)$ and~$b \in (T_2 \cup T_3) \cap V(H)$. Thus, in step~\ref{step:add_level1_to_level23} of the construction,
	a some vertex~$\hat z \in S_2 \setminus V(H)$ was added to that~$a\hat zb$
	is a path in~$\hat \Pi$. We replace~$z''$ by~$\hat z$ and iterate this process with the remainder of~$P''$ until we arrive at a sequence~$\hat P$
	where~$\hat P \subseteq \hat \Pi$. If~$\hat P$ is not a path, then some
	vertex~$\hat z$ was added twice, but since~$\hat z \not \in V(H)$, it is not saturated by~$f_H$ and by the same short-cutting argument as above, this contradicts our assumption that~$uPy$ is a shortest augmenting path. We conclude that~$\hat P$ is a path and $u\hat Rz\hat Ps \hat Q\hat y$ is finally the claimed augmenting path for~$f_H$ contained in~$\hat \Pi$.

 	In the other direction, simply note that $\hat \Pi$ is a subnetwork of~$\Pi$, therefore if~$f_H$ has an augmenting path in~$\hat \Pi$ that same path is also augmenting for~$f_H$ in~$\Pi$.

 	To bound the size of~$V(\hat \Pi)$, simply note that every construction step adds between one and two vertices to either a vertex of~$V(H)$ or between a pair of vertices in~$V(H)$ and adds at most three arcs for each such addition. Since~$|V(H)| \leq 3p+1$, it follows that~$|V(\hat \Pi)| = O(p^2)$. The subnetwork of~$\Pi$ induced by~$V(H)$ contains at most$p |V(H)|$ edges (Fact~\ref{fact:num-edges}), therefore, the total number of edges is~$O(p^2)$ as well.
    \qed
\end{proof}



\begin{replemma}{lemma:Stage2Update} 
	After the call to \ref{alg:Stage2Update}, $\Pack(u)$ is either a chordless, covering $(3,L)$-admissible packing of size $p+1$, or it is a chordless, maximum $(3,L)$-admissible packing of size~$p$.

	A call to \ref{alg:Stage2Update} costs time~$O(p^5 + p^3 \deg(u))$
	and space~$O(p^3)$.
\end{replemma}
\begin{proof}
	Let~$H = \Pack(u)$.
	It is easy to verify that the graph~$\hat \Pi$ constructed 
	by \ref{alg:Stage2Update} is, after the completion of the line~\ref{line:PiDone}, a version of the subnetwork as described in Lemma~\ref{lemma:SubflowNetwork}. As such, if the packing network~$\Pi$ for~$u$ contains an augmenting path for the flow~$f_H$, then so does~$\hat \Pi$

	Therefore, if \ref{alg:Stage2Update} does not find an augmenting path in Line~\ref{ln:Stg2_markE}, we conclude that~$H$ is a maximum $(3,L)$-admissible packing for~$u$. In this case, $\Pack(u)$ remains unchanged and thus~$|\Pack(u)| = p$ when the algorithm terminates.

	Otherwise, the augmentation operation on~$H$ corresponds to taking the symmetric difference between~$E(H)$ and~$E(P)$ and reassembling the resulting path packing. In this case, $|\Pack(u)| = p + 1$ and we have to show that this packing~$H'$ is covering and chordless. 

	Let~$S_1, S_2$ and~$T_2, T_3$ be the vertex sets as defined
	for~$\Pi$. Let us first show that
	$S_1 \cap V(H) = S_1 \cap V(H')$ and~$(T_2 \cup T_3) \cap V(H) = (T_2 \cup T_3 \cap V(H'))$. Note that a vertex~$x \in V(H)$ is removed from~$H$ through~$uPy$ if the path enters~$x$ and
	exists through saturated arcs, \eg precisely those edges that are incident to~$x$ in~$H$. This cannot happen for~$x \in S_1 \cap V(H)$, since	one of the saturated arcs is~$ux$ but~$ux$ cannot be part of the augmenting path---the first arc on~$uPy$ must be unsaturated. The vertices in~$(T_2 \cup T_3) \cap V(H)$ have as incoming arcs only unsaturated edges, so the~$uPy$ cannot remove them from the packing. 

	Therefore, $N_{H'}(v) \supseteq N_H(v)$, and thus any $
	(3,L)$-admissible path~$uwy$ or~$uwxy$ (\cf Lemma~\ref
	{lemma:PackingProperties}) for~$y \in T_2 \cup T_3$ that intersects~$H$
	in~$w$ will also intersect~$H'$ in~$w$, proving that~$H'$ is indeed covering.

	Lemma~\ref{lemma:FlowPacking} established that every flow in~$\Pi$ corresponds to a chordless packing. Since~$f_{H'}$ is also a flow in~$\Pi$,
	we conclude that~$H'$ is chordless.
 
	Let us now bound the running time of \ref{alg:Stage2Update}. The call
	to \ref{alg:CollectTargets} costs~$O(|H| p^2) = O(p^3)$ and the sets
	$T$ has size at most~$p^3$. The initial graph has size~$O(p)$, adding the arcs for Step~\ref{step:add_H} of the construction therefore cost~$O(p^2)$
	if we query each pair of vertices. Step~\ref{step:add_u_to_level2} cost~$O(p \deg_G(u))$, Step~\ref{step:add_level1_to_level23}~$O(p^2 (p+\log p))$,
	Step~\ref{step:add_u_to_level3}~$O((p^2 \deg(u) (p+\log p))$, and
	Step~\ref{step:add_direct_Ts} and~\ref{step:add_indirect_Ts} together~$O(p^4(p+\log p))$. We can summarize the running time of these construction steps
	as~$O(p^5 + p^3 \deg_G(u))$.
	
	To find the augmenting path, we construct an auxiliary flow network~$\Pi^\star$
	by splitting every vertex~$x \in \hat\Pi$ with unit capacity into~$x^-,x+$; add the
	arc~$x^-x+$ and then change very arc~$xy$ to~$x^+y^-$. This construction takes time~$|E(\hat \Pi)| = O(p^2)$. We then modify~$\Pi^\star$ to be the residual network for the
	flow~$f_H$ by taking each~$x \in V(H)$ and inverting the arc~$x^-x+$, as well
	as every~$xy \in E(H)$ and inverting the arc~$x^+y^-$, both in time~$O(p)$.
	Finally, we find a shortest path from the root~$v$ to~$T_{2,3}$ in $\Pi^\star$, using BFS, which takes time~$O(|E(\Pi^\star)| + |V(\Pi^\star)|) = O(p^2)$.
	Modifying the packing with the result augmenting path, if it exists, is easily subsumed by this running time.
    \qed
\end{proof}

Before we prove the main theorem, we will need to bound how often a packing
for a vertex~$u \in L$ needs to be updated, in particular with the more expensive update operations. 

\begin{replemma}{lemma:DecreasingPotential}
	Let~$L,R$ of~$V(G)$ be partition of~$G$ such that~$\pp_L^3(x) \leq p$
	for all~$x \in R$, let~$v \in R$. For every vertex~$u \in L$ with
	$T_1 \defeq \Target^1_L(u)$, $T_2 \defeq \Target^2_L(u) \setminus T_1$,
	and~$T_3 \defeq \Target^3_L(u) \setminus (T_1 \cup T_2)$ define the
	potential function~$\phi_L(u) \defeq |T_1|p^2 + |T_2|p + |T_3|$.

	Then for all~$u \in L$ if~$v \in \Target^3_{L\cup \{v\}}(u)$ then $\phi_{L \cup \{v\}}(u) > \phi_L(u)$ and otherwise $\phi_{L \cup \{v\}}(u) = \phi_L(u)$.
\end{replemma}
\begin{proof}
	It is easy to see that the sets~$T_1, T_2, T_3$ for~$u$ do not change
	if~$v \not \in \Target^3_{L\cup \{v\}}(u)$ and therefore the potential 
	function remains unchanged. Assume therefore that~$v \in \Target^3_{L\cup \{v\}}(u)$.

	To that end, let~$T_1, T_2, T_3$ describe the three target sets for~$u$
	under the partition~$L \cup \{v\}, R\setminus \{v\}$ and $T'_1, T'_2,T'_3$
	under the partition~$L, R$. Let us consider the potential difference
	\begin{align*}
			\Delta &:= \phi_{L\cup\{v\}}(u) - \phi_L(u) \\
			 		&= |T_1|p^2 + |T_2|p + |T_3| - (|T'_1|p^2 + |T'_2|p + |T'_3|) \\
			 		&= (|T_1|-|T'_1|)p^2 + (|T_2| - |T'_2|)p + |T_3| - |T'_3|
	\end{align*}
	Showing that~$\Delta > 0$ in all cases proves that~$\phi_{L\cup\{v\}}(u) > \phi_L(u)$. Note that~$v \in T_1 \cup T_2 \cup T_3$,
	we will now consider the different positions of~$v$ in these three sets
	and argue about how many new vertices could be added to them due to moving~$v$ across the partition.

	Consider first the case that~$v \in T_3$. Then no $(3,L)$-admissible path from~$u$ can contain~$v$, and we conclude that~$T'_1 = T_1$, $T'_2 = T_2$,
	and~$T'_3 = T_3 \setminus \{v\}$. It follows that~$\Delta = 1$.

	Next, consider the case that~$v \in T_2$. Then any~$(3,L)$-admissible path that contains~$v$ must have the shape~$uxvy$, with~$x \in R$ and~$y \in L$,
	in other words, $y \in N_L(v)$. Note that~$u \not \in N_L(v)$ as otherwise
	$v \not \in T_1$. This means that we can construct a $(3,L)$-admissible path-packing rooted at~$v$ using~$N_L(v)$ and a suitable~$(2,L)$-admissible
	path from~$u$ to~$v$, resulting in a path-packing of size~$|N_L(v)| + 1$.
	Since~$\pp^3_L(u) \leq p$, we have that $|N_L(v)| \leq \pp^3_L(u) - 1 \leq p -1$, and we conclude that $T'_1 = T_1$, $T'_2 = T_2 \setminus \{v\}$,
	and~$|T'_3| \leq |T_3| + (p-1)$. Therefore $\Delta \geq 1p - p + 1 = 1$.

	Finally, consider the case that~$v \in T_1$. Then any shortest $(3,L)$-admissible path that contains~$v$ must have the shape~$uvy$ or~$uvxz$, where~$y \in N_L(v) \setminus \{u\}$, $x \in R$ and~$z \in \Target^2_L(v)\setminus N_L(v)$.
	Collect the vertices of the first kind in a set~$Z_2$ and vertices
	of the second kind in a set~$Z_3$, note in particular that~$u \not \in Z_1 \cup Z_2$. Then~$T'_1 = T_1 \setminus\{v\}$,
	$T'_2 = T_2 \cup Z_2$ and~$T'_3 = T_3 \cup Z_3$ and
	$\Delta = p^2 - |Z_2|p - |Z_3|$, so we are left arguing that
	$|Z_2|p + |Z_3| < p^2$.

	Construct a tree~$\Gamma$ of~$(2,L)$-admissible paths from each vertex in~$Z_3$ to~$v$ and let~$X_3 \defeq N_\Gamma(v)$ be the `intermediate' vertices between~$v$ and~$Z_3$. Since~$X_3 \subseteq R$,
	for each~$x \in X_3$ it holds that~$\pp^3_L(x) \leq p$. Since~$u \not \in Z_3$, note that for each~$x$ we can construct a $(3,L)$-admissible path packing using the children of~$x$ in~$\Gamma$ as well as the path~$xvu$.
	Therefore, each vertex~$x$ has at most~$p-1$ children in~$\Gamma$ and
	thus~$|Z_3| \leq |X_3|(p-1)$. 

	Note further that we can construct a $(3,L)$-admissible packing rooted
	at~$v$ by using~$Z_2$ and~$v$ as direct neighbours, as well as~$|X_3|$
	paths into~$Z_3$, therefore~$|Z_2| + |X_3| \leq p - 1$. Therefore
	\begin{align*}
		|Z_2|p + |Z_3| &\leq |Z_2|p + |X_3|(p-1) \\
					&\leq (|Z_2|+|X_3|)p \leq (p-1)p <p^2,
	\end{align*}
	and we conclude that~$\Delta \geq 1$.

	Therefore in all three cases it holds that~$\Delta \geq 1$ and therefore
	that $\phi_{L \cup \{v\}}(u) > \phi_L(u)$, as claimed.
    \qed
\end{proof}

\begin{replemma}{lemma:PotentialBound}
	Assume that we reach a point in the algorithm where~$\Pack(u)$, $u \in L$, has size~$p$ for the first time after the call to \ref{alg:SimpleUpdate}. 
	Then~$\phi_L(u) \leq p^3$.
\end{replemma}
\begin{proof}
	Let~$T_1, T_2, T_3$ be defined as in Lemma~\ref{lemma:DecreasingPotential}
	for~$u$. Let~$H = \Pack(u)$ by Lemma~\ref{lemma:SimpleUpdate} we have that~$H$ is chordless and covering, in particular all vertices $T_1$ appear in~$N_H(u)$. Let~$S_1 = N_H(u) \setminus T_1$.
	
	By Lemma~\ref{lemma:PackingProperties}, for every~$y \in T_2 \cup T_3$, there exists either a path~$uwy$
	or~$uwxy$ with~$x \in R$ and~$w \in H$. For each such vertex~$w$, we have
	that~$\Target^2_L(G) \leq (p-1)^2$ by the usual tree argument we have used several times already. Thus~$|T_2| + |T_3| \leq |S_1|(p-1)^2$.

	Putting these bounds together, we have that
	\begin{align*}
		\phi_L(u) &= |T_1|p^2 + |T_2|p + |T_3| \\
							&\leq |T_1|p^2 + (|T_2| + |T_3|) \\
							&\leq |T_1|p^2 + |S_1|(p-1)^2 \leq (|T_1| + |S_1|)p^2 \\
							&= \Pack(u) \cdot p^2 = p^3. 
	\end{align*}
    \qed
\end{proof}

\repeattheorem{thmMain}
\begin{proof}
	By Lemma~\ref{lemma:Stage2Update}, a vertex~$u$ is added to~$\Candidates$
	in \ref{alg:Query} if~$\Pack(u)$ is a maximal $(3,L)$-admissible packing
	of size~$p$, proving that~$\pp^3_L(u) \leq p$. Therefore every vertex~$v$
	returned by the Oracle satisfies~$\pp^3_L(v) \leq p$ for the current set~$L$.
	As the path-packing number of~$v$ can only decrease when further vertices are moved from~$L$ to~$R$ (\cf Lemma~\ref{lemma:Candidates}) then the ordering~$\G$
	returned by the \ref{alg:Main}, assuming that the Oracle never returns FALSE, indeed satisfies~$\adm_3(\G) \leq p$ and therefore~$\adm_3(G) \leq p$.

	If, on the other hand, the Oracle returns FALSE, by Lemma~\ref{alg:Stage2Update}, for all~$v \in L$ it holds that~$\Pack(v)$ is a $(3,L)$-admissible packing of size
	at least~$p+1$. By Lemma~\ref{lemma:Greedy}, this means that~$G$ has~$\adm_3(G) > p$ and the algorithm returns FALSE.

	We showed in Lemma~\ref{lemma:Vias} that the maintenance of~$\Vias$ takes to~$O(p^2m)$ time and $O(p^3n)$ space over the whole algorithm run. Maintaining the two-packings costs, by Lemma~\ref{lemma:2Packings}, $O(mp \log p)$ time and~$O(pn)$ space.
		
	To bound the cost of updating the packings for vertices in~$L$, let us
	first observe that \ref{alg:SimpleUpdate} is called for at most~$|\Target^3_L(v)| \leq p^3$ vertices at a cost of $O(p^3 \log p)$, therefore the total cost of these calls over the whole run is bounded by~$O(n p^6 \log p)$.

	We then need to bound how often~\ref{alg:Stage1Update} and \ref{alg:Stage2Update} could be called for the same vertex~$u$. By Lemma~\ref{lemma:PotentialBound}
	the potential for~$u$ at this point is~$\phi_L(u) \leq p^3$ and by
	Lemma~\ref{lemma:DecreasingPotential} this potential decreases by at least one every time~$|\Pack(u)|$ drops down to~$p$. If~$\phi_L(u) < p$, then 
	$\Target^3_L(u) < p$ and~$u$ cannot have a packing of size~$p$ any more.
	We conclude that~\ref{alg:Stage1Update} and \ref{alg:Stage2Update}
	arc called at most~$p^3$ times per vertex~$u$ with costs
	$O(\deg(u) p^4)$ (Lemma~\ref{lemma:Stage1Update}) and $O(p^5 + p^3 \deg(u))$ (Lemma~\ref{lemma:Stage2Update}), respectively. The cost of these calls over the whole run is therefore bounded by~$O(n p^6 + m p^7) = O(m p^7)$.
\end{proof}

\newpage

\onecolumn
\subsection*{Experimental results}
The following table contains the complete experimental results. We abbreviated some network names for the sake of space.

\newcommand{\abbr}[1]{\textcolor{gray}{#1}}
\begingroup
\footnotesize
\begin{longtable}{@{}l@{}rrrrrrrrrr@{}}
\toprule
 &  &  &  &  &  &  &  & \multicolumn{1}{c}{Time} & \multicolumn{1}{c}{Peak} & \multicolumn{1}{c}{Network} \\
Network & $\adm_2$ & $\adm_3$ & $\bar d$ & $\text{deg}$ & $\Delta$ & $m$ & $n$ & (seconds) & mem. (mB) & mem. (mB) \\
\midrule
\endhead
\bottomrule
\endfoot
AS-oregon-1 & 28 & 35 & 4.19 & 17 & 2389 & 23409 & 11174 & 16.09 & 39.81 & 1.52 \\
AS-oregon-2 & 52 & 62 & 5.71 & 31 & 2432 & 32730 & 11461 & 19.92 & 38.69 & 1.60 \\
\abbr{BG}-\abbr{AC}-\abbr{Lumin.} & 8 & 9 & 2.51 & 6 & 376 & 2312 & 1840 & 0.18 & 2.45 & 0.43 \\
\abbr{BG}-\abbr{AC}-Ms & 183 & 202 & 15.90 & 58 & 2217 & 321887 & 40495 & 1063.41 & 1338.17 & 8.96 \\
\abbr{BG}-\abbr{AC}-Rna & 75 & 75 & 6.22 & 54 & 3572 & 42815 & 13765 & 38.30 & 185.05 & 1.79 \\
\abbr{BG}-\abbr{AC}-Western & 64 & 82 & 6.09 & 17 & 535 & 64046 & 21028 & 45.58 & 146.40 & 3.08 \\
\abbr{BG}-All & 476 & 476 & 34.86 & 134 & 3620 & 1316843 & 75550 & 13541.29 & 8311.50 & 28.27 \\
\abbr{BG}-\abbr{A.}-Thaliana-Columbia & 53 & 68 & 9.20 & 26 & 1341 & 47916 & 10417 & 36.97 & 99.03 & 1.80 \\
\abbr{BG}-Biochemical-Activity & 29 & 36 & 4.12 & 11 & 427 & 17746 & 8620 & 4.37 & 25.57 & 1.54 \\
\abbr{BG}-Bos-Taurus & 4 & 4 & 1.87 & 3 & 27 & 424 & 454 & 0.05 & 0.44 & 0.17 \\
\abbr{BG}-\abbr{C.}-Elegans & 70 & 73 & 7.40 & 64 & 522 & 23646 & 6394 & 8.95 & 46.52 & 0.96 \\
\abbr{BG}-\abbr{C.}-Albicans-Sc5314 & 9 & 9 & 2.87 & 9 & 427 & 1609 & 1121 & 0.12 & 1.31 & 0.26 \\
\abbr{BG}-Canis-Familiaris & 2 & 2 & 1.75 & 2 & 90 & 125 & 143 & 0.02 & 0.10 & 0.10 \\
\abbr{BG}-Chemicals & 1 & 1 & 1.69 & 1 & 413 & 28093 & 33266 & 0.42 & 28.22 & 5.32 \\
\abbr{BG}-Co-Crystal-Structure & 5 & 5 & 1.76 & 5 & 92 & 2021 & 2291 & 0.07 & 1.92 & 0.42 \\
\abbr{BG}-Co-Fractionation & 83 & 83 & 10.23 & 83 & 187 & 56354 & 11017 & 27.35 & 112.47 & 1.93 \\
\abbr{BG}-Co-Localization & 9 & 13 & 2.51 & 6 & 63 & 4452 & 3543 & 0.21 & 4.20 & 0.43 \\
\abbr{BG}-Co-Purification & 12 & 12 & 2.76 & 12 & 1972 & 5970 & 4326 & 1.42 & 6.60 & 0.79 \\
\abbr{BG}-Cricetulus-Griseus & 1 & 1 & 1.65 & 1 & 30 & 57 & 69 & 0.03 & 0.09 & 0.09 \\
\abbr{BG}-Danio-Rerio & 3 & 3 & 2.04 & 3 & 61 & 266 & 261 & 0.02 & 0.22 & 0.12 \\
\abbr{BG}-\abbr{D.}-Discoideum-Ax4 & 1 & 1 & 1.48 & 1 & 4 & 20 & 27 & 0.01 & 0.08 & 0.08 \\
\abbr{BG}-Dosage-Growth-Defect & 9 & 10 & 3.03 & 5 & 213 & 2193 & 1447 & 0.11 & 1.43 & 0.26 \\
\abbr{BG}-Dosage-Lethality & 8 & 9 & 2.58 & 4 & 392 & 2289 & 1776 & 0.23 & 2.06 & 0.26 \\
\abbr{BG}-Dosage-Rescue & 11 & 18 & 3.81 & 7 & 75 & 6444 & 3380 & 0.42 & 6.65 & 0.44 \\
\abbr{BG}-\abbr{D.}-Melanogaster & 83 & 104 & 12.98 & 83 & 303 & 60556 & 9330 & 60.16 & 211.36 & 1.96 \\
\abbr{BG}-\abbr{E.}-Nidulans-Fgsc-A4 & 2 & 2 & 1.94 & 2 & 44 & 62 & 64 & 0.03 & 0.09 & 0.09 \\
\abbr{BG}-\abbr{E.}-Coli-K12-Mg1655 & 10 & 13 & 2.97 & 5 & 58 & 1889 & 1273 & 0.09 & 1.85 & 0.26 \\
\abbr{BG}-\abbr{E.}-Coli-K12-W3110 & 290 & 305 & 89.40 & 133 & 1187 & 181620 & 4063 & 1259.80 & 824.25 & 3.77 \\
\abbr{BG}-Far-Western & 3 & 3 & 1.82 & 3 & 60 & 1089 & 1199 & 0.03 & 1.04 & 0.25 \\
\abbr{BG}-Fret & 24 & 24 & 2.82 & 19 & 51 & 2395 & 1700 & 0.07 & 1.55 & 0.25 \\
\abbr{BG}-Gallus-Gallus & 4 & 5 & 2.11 & 4 & 110 & 436 & 413 & 0.03 & 0.33 & 0.12 \\
\abbr{BG}-Glycine-Max & 2 & 2 & 1.77 & 2 & 13 & 39 & 44 & 0.02 & 0.09 & 0.09 \\
\abbr{BG}-Hepatitus-C-Virus & 1 & 1 & 1.97 & 1 & 133 & 134 & 136 & 0.02 & 0.10 & 0.10 \\
\abbr{BG}-Homo-Sapiens & 263 & 280 & 30.69 & 71 & 2882 & 369767 & 24093 & 2070.58 & 1867.87 & 9.40 \\
\abbr{BG}-\abbr{HHV}-1 & 3 & 3 & 2.34 & 3 & 40 & 208 & 178 & 0.02 & 0.14 & 0.10 \\
\abbr{BG}-\abbr{HHV}-4 & 2 & 2 & 2.02 & 2 & 154 & 326 & 323 & 0.02 & 0.19 & 0.12 \\
\abbr{BG}-\abbr{HHV}-5 & 1 & 1 & 1.77 & 1 & 27 & 107 & 121 & 0.01 & 0.10 & 0.10 \\
\abbr{BG}-\abbr{HHV}-8 & 3 & 3 & 1.93 & 3 & 119 & 691 & 716 & 0.03 & 0.58 & 0.17 \\
\abbr{BG}-\abbr{HIV}-1 & 6 & 7 & 2.32 & 3 & 324 & 1319 & 1138 & 0.11 & 1.30 & 0.26 \\
\abbr{BG}-\abbr{HIV}-2 & 1 & 1 & 1.58 & 1 & 6 & 15 & 19 & 0.03 & 0.08 & 0.08 \\
\abbr{BG}-\abbr{HPV}-16 & 2 & 2 & 2.15 & 2 & 93 & 186 & 173 & 0.02 & 0.12 & 0.10 \\
Cannes2013 & 114 & 167 & 3.82 & 27 & 15169 & 835892 & 438089 & 6893.26 & 3244.63 & 47.83 \\
CoW-interstate & 7 & 7 & 3.51 & 4 & 25 & 319 & 182 & 0.02 & 0.23 & 0.10 \\
DNC-emails & 28 & 29 & 4.70 & 17 & 402 & 4384 & 1866 & 0.54 & 5.30 & 0.46 \\
EU-email-core & 74 & 81 & 32.58 & 34 & 345 & 16064 & 986 & 7.75 & 29.49 & 0.44 \\
JDK\_dependency & 76 & 78 & 16.68 & 65 & 5923 & 53658 & 6434 & 70.38 & 140.39 & 1.38 \\
JUNG-javax & 76 & 78 & 16.43 & 65 & 5655 & 50290 & 6120 & 66.08 & 133.60 & 1.32 \\
NYClimateMarch2014 & 161 & 190 & 6.39 & 34 & 14687 & 327080 & 102378 & 3135.85 & 1401.96 & 13.59 \\
NZ\_legal & 68 & 75 & 14.70 & 25 & 429 & 15739 & 2141 & 9.30 & 33.57 & 0.55 \\
Noordin-terror-loc & 4 & 4 & 2.99 & 3 & 18 & 190 & 127 & 0.02 & 0.16 & 0.10 \\
Noordin-terror-orgas & 3 & 4 & 2.81 & 3 & 21 & 181 & 129 & 0.02 & 0.15 & 0.10 \\
Noordin-terror-relation & 11 & 11 & 7.17 & 11 & 28 & 251 & 70 & 0.02 & 0.12 & 0.09 \\
ODLIS & 38 & 50 & 11.29 & 12 & 592 & 16377 & 2900 & 10.62 & 30.06 & 0.58 \\
Opsahl-forum & 42 & 46 & 15.65 & 14 & 128 & 7036 & 899 & 1.55 & 13.13 & 0.34 \\
Opsahl-socnet & 61 & 67 & 14.57 & 20 & 255 & 13838 & 1899 & 5.60 & 29.59 & 0.59 \\
StackOverflow-tags & 6 & 6 & 4.26 & 6 & 16 & 245 & 115 & 0.02 & 0.17 & 0.10 \\
Y2H\_union & 7 & 10 & 2.75 & 4 & 89 & 2705 & 1966 & 0.15 & 2.84 & 0.41 \\
Yeast & 18 & 27 & 6.08 & 6 & 66 & 7182 & 2361 & 1.02 & 10.00 & 0.47 \\
actor\_movies & 105 & 112 & 5.75 & 14 & 646 & 1470404 & 511463 & 2708.59 & 4011.90 & 101.29 \\
advogato & 86 & 95 & 15.24 & 25 & 803 & 39285 & 5155 & 38.12 & 94.60 & 1.17 \\
airlines & 18 & 20 & 11.04 & 13 & 130 & 1297 & 235 & 0.09 & 0.86 & 0.14 \\
american\_revolution & 3 & 3 & 2.27 & 3 & 59 & 160 & 141 & 0.02 & 0.12 & 0.10 \\
as-22july06 & 44 & 52 & 4.22 & 25 & 2390 & 48436 & 22963 & 52.21 & 110.27 & 3.05 \\
as20000102 & 21 & 25 & 3.88 & 12 & 1458 & 12572 & 6474 & 3.94 & 16.94 & 0.81 \\
autobahn & 3 & 3 & 2.56 & 2 & 5 & 478 & 374 & 0.05 & 0.35 & 0.12 \\
bahamas & 8 & 10 & 2.24 & 6 & 14902 & 246291 & 219856 & 316.10 & 299.92 & 21.84 \\
bergen & 12 & 12 & 10.26 & 9 & 32 & 272 & 53 & 0.02 & 0.12 & 0.09 \\
bitcoin-otc-negative & 21 & 22 & 4.06 & 16 & 227 & 3259 & 1606 & 0.29 & 3.69 & 0.26 \\
bitcoin-otc-positive & 50 & 60 & 6.67 & 20 & 788 & 18591 & 5573 & 11.56 & 38.79 & 0.87 \\
bn-fly-\abbr{d.}\_medulla\_1 & 44 & 51 & 10.01 & 18 & 927 & 8911 & 1781 & 2.68 & 16.32 & 0.35 \\
bn-mouse\_retina\_1 & 223 & 237 & 168.79 & 121 & 744 & 90811 & 1076 & 100.52 & 87.10 & 1.91 \\
boards\_gender\_1m & 25 & 25 & 9.67 & 25 & 88 & 19993 & 4134 & 1.37 & 17.09 & 0.93 \\
boards\_gender\_2m & 7 & 10 & 2.65 & 4 & 45 & 5598 & 4220 & 0.25 & 6.81 & 0.79 \\
ca-CondMat & 30 & 51 & 8.08 & 25 & 279 & 93439 & 23133 & 29.62 & 146.34 & 3.45 \\
ca-GrQc & 43 & 43 & 5.53 & 43 & 81 & 14484 & 5241 & 0.74 & 12.17 & 0.85 \\
ca-HepPh & 238 & 238 & 19.74 & 135 & 491 & 118489 & 12006 & 293.45 & 275.42 & 2.93 \\
capitalist & 21 & 23 & 15.41 & 19 & 91 & 1071 & 139 & 0.12 & 0.68 & 0.11 \\
celegans & 21 & 24 & 14.46 & 10 & 134 & 2148 & 297 & 0.25 & 2.27 & 0.14 \\
chess & 88 & 102 & 15.31 & 29 & 181 & 55899 & 7301 & 44.74 & 172.76 & 2.16 \\
chicago & 1 & 1 & 1.77 & 1 & 12 & 1298 & 1467 & 0.03 & 1.14 & 0.25 \\
cit-HepPh & 89 & 140 & 24.37 & 30 & 846 & 420877 & 34546 & 940.86 & 1467.96 & 10.25 \\
cit-HepTh & 128 & 178 & 25.37 & 37 & 2468 & 352285 & 27769 & 943.92 & 1261.99 & 7.52 \\
codeminer & 5 & 6 & 2.80 & 4 & 55 & 1015 & 724 & 0.03 & 0.83 & 0.17 \\
columbia-mobility & 9 & 11 & 9.61 & 9 & 228 & 4147 & 863 & 0.29 & 2.98 & 0.21 \\
columbia-social & 19 & 20 & 17.90 & 18 & 545 & 7724 & 863 & 0.81 & 5.95 & 0.26 \\
cora\_citation & 30 & 48 & 7.70 & 13 & 377 & 89157 & 23166 & 26.98 & 127.84 & 3.40 \\
countries & 16 & 17 & 2.11 & 6 & 110602 & 624402 & 592414 & 13906.07 & 795.14 & 89.88 \\
cpan-authors & 17 & 18 & 5.03 & 9 & 327 & 2112 & 839 & 0.21 & 0.94 & 0.18 \\
deezer & 60 & 108 & 18.26 & 21 & 420 & 498202 & 54573 & 621.39 & 1562.55 & 11.76 \\
digg & 46 & 79 & 5.68 & 8 & 285 & 86312 & 30398 & 88.10 & 249.38 & 6.34 \\
diseasome & 11 & 11 & 3.86 & 11 & 84 & 2738 & 1419 & 0.08 & 2.04 & 0.26 \\
dolphins & 6 & 7 & 5.13 & 4 & 12 & 159 & 62 & 0.02 & 0.12 & 0.09 \\
dutch-textiles & 5 & 5 & 3.75 & 5 & 31 & 90 & 48 & 0.02 & 0.09 & 0.09 \\
ecoli-transcript & 5 & 5 & 2.73 & 3 & 74 & 578 & 423 & 0.03 & 0.38 & 0.12 \\
edinburgh\_\abbr{assoc.}\_thesaurus & 197 & 203 & 25.69 & 34 & 1062 & 297094 & 23132 & 1303.58 & 1931.91 & 6.62 \\
email-Enron & 145 & 169 & 10.02 & 43 & 1383 & 183831 & 36692 & 547.25 & 532.66 & 7.30 \\
escorts & 45 & 52 & 4.67 & 11 & 305 & 39044 & 16730 & 19.23 & 81.09 & 3.13 \\
euroroad & 3 & 3 & 2.41 & 2 & 10 & 1417 & 1174 & 0.04 & 1.20 & 0.25 \\
eva-corporate & 4 & 4 & 1.85 & 3 & 552 & 6711 & 7253 & 0.26 & 8.01 & 1.40 \\
exnet-water & 3 & 3 & 2.55 & 2 & 10 & 2416 & 1893 & 0.04 & 2.05 & 0.42 \\
facebook-links & 191 & 226 & 25.64 & 52 & 1098 & 817090 & 63731 & 3418.32 & 4394.95 & 20.10 \\
foldoc & 36 & 69 & 13.70 & 12 & 728 & 91471 & 13356 & 53.65 & 160.98 & 2.45 \\
foodweb-caribbean & 23 & 26 & 13.47 & 13 & 196 & 3313 & 492 & 0.26 & 1.52 & 0.20 \\
foodweb-otago & 23 & 23 & 11.80 & 14 & 45 & 832 & 141 & 0.07 & 0.55 & 0.11 \\
football & 11 & 11 & 10.66 & 8 & 12 & 613 & 115 & 0.02 & 0.54 & 0.10 \\
google+ & 38 & 42 & 3.32 & 12 & 2761 & 39194 & 23628 & 21.18 & 58.28 & 3.03 \\
gowalla & 202 & 251 & 9.67 & 51 & 14730 & 950327 & 196591 & 9271.88 & 4049.49 & 30.90 \\
haggle & 40 & 40 & 15.50 & 39 & 101 & 2124 & 274 & 0.23 & 2.09 & 0.14 \\
hex & 4 & 5 & 5.62 & 3 & 6 & 930 & 331 & 0.03 & 0.65 & 0.13 \\
hypertext\_2009 & 43 & 43 & 38.87 & 28 & 98 & 2196 & 113 & 0.16 & 0.88 & 0.13 \\
ia-email-univ & 21 & 29 & 9.62 & 11 & 71 & 5451 & 1133 & 0.73 & 8.74 & 0.29 \\
ia-infect-dublin & 21 & 22 & 13.49 & 17 & 50 & 2765 & 410 & 0.22 & 2.48 & 0.15 \\
ia-reality & 12 & 16 & 2.26 & 5 & 261 & 7680 & 6809 & 0.70 & 10.16 & 0.78 \\
infectious & 21 & 22 & 13.49 & 17 & 50 & 2765 & 410 & 0.19 & 2.43 & 0.15 \\
ingredients & 475 & 476 & 197.46 & 136 & 3426 & 431654 & 4372 & 1768.98 & 683.14 & 9.99 \\
iscas89-s1196 & 4 & 5 & 2.85 & 2 & 16 & 537 & 377 & 0.06 & 0.43 & 0.12 \\
iscas89-s1238 & 5 & 5 & 3.00 & 2 & 18 & 625 & 416 & 0.03 & 0.48 & 0.12 \\
iscas89-s13207 & 6 & 6 & 2.73 & 4 & 37 & 3406 & 2492 & 0.07 & 2.57 & 0.43 \\
iscas89-s1423 & 3 & 3 & 2.62 & 2 & 17 & 554 & 423 & 0.02 & 0.37 & 0.12 \\
iscas89-s1488 & 7 & 7 & 3.37 & 3 & 53 & 779 & 463 & 0.04 & 0.61 & 0.17 \\
iscas89-s1494 & 7 & 7 & 3.37 & 3 & 56 & 796 & 473 & 0.04 & 0.67 & 0.17 \\
iscas89-s15850 & 4 & 5 & 2.47 & 4 & 25 & 4004 & 3247 & 0.08 & 2.84 & 0.41 \\
iscas89-s27 & 1 & 1 & 1.78 & 1 & 3 & 8 & 9 & 0.03 & 0.08 & 0.08 \\
iscas89-s298 & 3 & 3 & 2.85 & 2 & 11 & 131 & 92 & 0.02 & 0.09 & 0.09 \\
iscas89-s344 & 3 & 3 & 2.44 & 2 & 9 & 122 & 100 & 0.02 & 0.09 & 0.09 \\
iscas89-s349 & 3 & 3 & 2.49 & 2 & 9 & 127 & 102 & 0.02 & 0.09 & 0.09 \\
iscas89-s35932 & 2 & 2 & 2.55 & 2 & 1440 & 15961 & 12515 & 0.47 & 10.43 & 1.46 \\
iscas89-s382 & 4 & 4 & 2.90 & 2 & 18 & 168 & 116 & 0.03 & 0.13 & 0.10 \\
iscas89-s38417 & 6 & 6 & 2.24 & 4 & 39 & 10635 & 9500 & 0.27 & 9.63 & 1.45 \\
iscas89-s38584 & 7 & 7 & 2.74 & 4 & 54 & 12573 & 9193 & 0.46 & 10.90 & 1.47 \\
iscas89-s386 & 4 & 4 & 3.51 & 3 & 23 & 200 & 114 & 0.05 & 0.15 & 0.10 \\
iscas89-s400 & 4 & 4 & 3.01 & 2 & 19 & 182 & 121 & 0.02 & 0.14 & 0.10 \\
iscas89-s444 & 4 & 4 & 3.07 & 2 & 19 & 206 & 134 & 0.03 & 0.15 & 0.10 \\
iscas89-s510 & 4 & 6 & 2.92 & 2 & 12 & 251 & 172 & 0.02 & 0.19 & 0.10 \\
iscas89-s526 & 4 & 4 & 3.38 & 3 & 12 & 270 & 160 & 0.02 & 0.18 & 0.10 \\
iscas89-s526n & 4 & 4 & 3.37 & 3 & 12 & 268 & 159 & 0.02 & 0.19 & 0.10 \\
iscas89-s5378 & 5 & 5 & 2.32 & 3 & 10 & 1639 & 1411 & 0.06 & 1.25 & 0.25 \\
iscas89-s641 & 4 & 4 & 2.88 & 3 & 12 & 144 & 100 & 0.02 & 0.10 & 0.09 \\
iscas89-s713 & 4 & 4 & 2.63 & 3 & 12 & 180 & 137 & 0.02 & 0.14 & 0.10 \\
iscas89-s820 & 9 & 9 & 4.02 & 3 & 48 & 480 & 239 & 0.03 & 0.35 & 0.13 \\
iscas89-s832 & 9 & 9 & 4.07 & 3 & 49 & 498 & 245 & 0.03 & 0.36 & 0.13 \\
iscas89-s9234 & 4 & 4 & 2.39 & 4 & 18 & 2370 & 1985 & 0.05 & 2.02 & 0.41 \\
iscas89-s953 & 3 & 4 & 2.73 & 2 & 12 & 454 & 332 & 0.02 & 0.36 & 0.12 \\
jazz & 30 & 36 & 27.70 & 29 & 100 & 2742 & 198 & 0.30 & 2.09 & 0.14 \\
karate & 4 & 4 & 4.59 & 4 & 17 & 78 & 34 & 0.02 & 0.09 & 0.09 \\
lederberg & 47 & 64 & 9.98 & 15 & 1103 & 41532 & 8324 & 27.38 & 92.82 & 1.81 \\
lesmiserables & 9 & 9 & 6.60 & 9 & 36 & 254 & 77 & 0.02 & 0.14 & 0.09 \\
link-pedigree & 2 & 3 & 2.51 & 2 & 14 & 1125 & 898 & 0.03 & 0.97 & 0.25 \\
linux & 106 & 125 & 13.83 & 23 & 9338 & 213217 & 30834 & 1071.45 & 501.17 & 7.99 \\
loc-brightkite\_edges & 85 & 122 & 7.35 & 52 & 1134 & 214078 & 58228 & 509.98 & 700.84 & 12.74 \\
location & 16 & 16 & 2.61 & 5 & 12189 & 293697 & 225486 & 210.99 & 275.75 & 22.86 \\
marvel & 58 & 63 & 9.95 & 18 & 1625 & 96662 & 19428 & 98.00 & 108.74 & 3.47 \\
mg\_casino & 9 & 9 & 5.98 & 9 & 94 & 326 & 109 & 0.02 & 0.14 & 0.09 \\
mg\_forrestgump & 8 & 8 & 5.77 & 8 & 89 & 271 & 94 & 0.03 & 0.12 & 0.09 \\
mg\_godfatherII & 8 & 8 & 5.62 & 8 & 34 & 219 & 78 & 0.02 & 0.10 & 0.09 \\
mg\_watchmen & 7 & 7 & 5.29 & 7 & 33 & 201 & 76 & 0.02 & 0.11 & 0.09 \\
minnesota & 3 & 3 & 2.50 & 2 & 5 & 3303 & 2642 & 0.05 & 2.50 & 0.42 \\
moreno\_health & 12 & 16 & 8.24 & 7 & 27 & 10455 & 2539 & 0.73 & 12.36 & 0.47 \\
mousebrain & 141 & 141 & 151.07 & 111 & 205 & 16089 & 213 & 2.32 & 3.86 & 0.43 \\
movielens\_1m & 554 & 652 & 205.26 & 135 & 3428 & 1000209 & 9746 & 10439.14 & 5397.62 & 22.07 \\
movies & 5 & 6 & 3.80 & 3 & 19 & 192 & 101 & 0.02 & 0.13 & 0.09 \\
muenchen-bahn & 3 & 3 & 2.59 & 2 & 13 & 578 & 447 & 0.03 & 0.40 & 0.12 \\
munin & 3 & 3 & 2.11 & 3 & 66 & 1397 & 1324 & 0.03 & 1.12 & 0.25 \\
netscience & 19 & 19 & 3.75 & 19 & 34 & 2742 & 1461 & 0.06 & 1.61 & 0.26 \\
offshore & 20 & 22 & 3.63 & 13 & 37336 & 505965 & 278877 & 3710.32 & 432.27 & 44.91 \\
openflights & 52 & 57 & 10.67 & 28 & 242 & 15677 & 2939 & 7.54 & 30.28 & 0.57 \\
p2p-Gnutella04 & 23 & 35 & 7.35 & 7 & 103 & 39994 & 10876 & 9.27 & 70.26 & 1.65 \\
panama & 62 & 62 & 2.52 & 62 & 7015 & 702437 & 556686 & 565.08 & 1173.10 & 88.11 \\
paradise & 55 & 59 & 2.93 & 23 & 35359 & 794545 & 542102 & 2817.62 & 1735.40 & 90.70 \\
photoviz\_dynamic & 7 & 8 & 3.24 & 4 & 29 & 610 & 376 & 0.03 & 0.43 & 0.12 \\
pigs & 3 & 3 & 2.41 & 2 & 39 & 592 & 492 & 0.02 & 0.51 & 0.17 \\
polblogs & 72 & 82 & 27.31 & 36 & 351 & 16715 & 1224 & 11.51 & 36.35 & 0.46 \\
polbooks & 9 & 9 & 8.40 & 6 & 25 & 441 & 105 & 0.04 & 0.26 & 0.10 \\
pollination-carlinville & 53 & 54 & 20.34 & 18 & 157 & 15255 & 1500 & 4.49 & 30.70 & 0.44 \\
pollination-daphni & 26 & 29 & 7.36 & 9 & 124 & 2933 & 797 & 0.40 & 3.17 & 0.19 \\
pollination-tenerife & 6 & 6 & 3.79 & 4 & 17 & 129 & 68 & 0.05 & 0.11 & 0.09 \\
pollination-uk & 76 & 88 & 33.97 & 35 & 256 & 16712 & 984 & 9.11 & 22.05 & 0.45 \\
ratbrain & 78 & 83 & 91.57 & 67 & 497 & 23030 & 503 & 5.11 & 13.83 & 0.54 \\
reactome & 184 & 184 & 46.64 & 62 & 855 & 147547 & 6327 & 125.70 & 280.92 & 3.24 \\
residence\_hall & 21 & 25 & 16.95 & 11 & 56 & 1839 & 217 & 0.14 & 1.86 & 0.12 \\
rhesusbrain & 37 & 41 & 25.24 & 19 & 111 & 3054 & 242 & 0.40 & 2.99 & 0.16 \\
roget-thesaurus & 11 & 17 & 7.22 & 6 & 28 & 3648 & 1010 & 0.24 & 4.51 & 0.27 \\
seventh-graders & 16 & 16 & 17.24 & 13 & 28 & 250 & 29 & 0.02 & 0.09 & 0.09 \\
slashdot\_threads & 74 & 105 & 4.60 & 13 & 2915 & 117378 & 51083 & 269.86 & 490.46 & 6.18 \\
soc-Epinions1 & 268 & 286 & 10.69 & 67 & 3044 & 405740 & 75879 & 4560.60 & 3340.97 & 15.74 \\
soc-Slashdot0811 & 232 & 262 & 12.13 & 54 & 2539 & 469180 & 77360 & 5020.69 & 4088.47 & 15.66 \\
soc-advogato & 86 & 95 & 15.26 & 25 & 807 & 39432 & 5167 & 37.71 & 98.34 & 1.17 \\
soc-gplus & 38 & 42 & 3.32 & 12 & 2761 & 39194 & 23628 & 21.21 & 57.92 & 3.03 \\
soc-hamsterster & 51 & 62 & 13.71 & 24 & 273 & 16630 & 2426 & 6.37 & 28.85 & 0.59 \\
soc-wiki-Vote & 16 & 20 & 6.56 & 9 & 102 & 2914 & 889 & 0.26 & 3.51 & 0.19 \\
sp\_data\_school\_day\_2 & 57 & 61 & 46.55 & 33 & 88 & 5539 & 238 & 0.73 & 3.95 & 0.20 \\
teams & 127 & 127 & 2.92 & 9 & 2671 & 1366466 & 935591 & 7182.31 & 4367.93 & 175.88 \\
train\_bombing & 10 & 10 & 7.59 & 10 & 29 & 243 & 64 & 0.02 & 0.11 & 0.09 \\
twittercrawl & 237 & 268 & 84.70 & 132 & 1084 & 154824 & 3656 & 983.21 & 614.49 & 3.33 \\
ukroad & 3 & 3 & 2.53 & 3 & 5 & 15641 & 12378 & 0.21 & 11.04 & 1.36 \\
unicode\_languages & 7 & 8 & 2.89 & 4 & 141 & 1255 & 868 & 0.07 & 0.91 & 0.17 \\
wafa-ceos & 7 & 7 & 7.15 & 5 & 22 & 93 & 26 & 0.03 & 0.08 & 0.08 \\
wafa-eies & 27 & 27 & 28.98 & 24 & 44 & 652 & 45 & 0.02 & 0.13 & 0.10 \\
wafa-hightech & 13 & 14 & 15.14 & 12 & 20 & 159 & 21 & 0.02 & 0.09 & 0.09 \\
wafa-padgett & 3 & 4 & 3.60 & 3 & 8 & 27 & 15 & 0.03 & 0.08 & 0.08 \\
web-EPA & 16 & 25 & 4.17 & 6 & 175 & 8909 & 4271 & 1.28 & 15.28 & 0.78 \\
web-california & 26 & 33 & 5.17 & 11 & 199 & 15969 & 6175 & 3.43 & 22.31 & 0.84 \\
web-google & 17 & 17 & 4.27 & 17 & 59 & 2773 & 1299 & 0.10 & 1.98 & 0.26 \\
wiki-vote & 162 & 183 & 28.32 & 53 & 1065 & 100762 & 7115 & 233.81 & 352.30 & 2.17 \\
wikipedia-norm & 59 & 67 & 16.34 & 22 & 455 & 15372 & 1881 & 7.19 & 25.49 & 0.58 \\
win95pts & 3 & 3 & 2.26 & 2 & 9 & 112 & 99 & 0.02 & 0.09 & 0.09 \\
windsurfers & 15 & 16 & 15.63 & 11 & 31 & 336 & 43 & 0.06 & 0.12 & 0.09 \\
word\_adjacencies & 11 & 12 & 7.59 & 6 & 49 & 425 & 112 & 0.03 & 0.28 & 0.10 \\
zewail & 55 & 79 & 16.29 & 18 & 331 & 54182 & 6651 & 36.97 & 140.82 & 1.41 \\
\end{longtable}

\endgroup
\end{document}